\pgfplotsset{compat=1.14}
\title{Towards practical differentially private causal graph discovery}
\begin{document}

\newcommand{\lun}[1]{\textcolor{red}{lun:#1}}
\newcommand{\dawn}[1]{\textcolor{blue}{dawn:#1}}
\newtheorem{thm}{Theorem}
\newtheorem{lemma}{Lemma}
\newtheorem{definition}{Definition}
\newcommand{\cmark}{\ding{51}}%
\newcommand{\xmark}{\ding{55}}%
\newcommand{\alg}{\texttt{Priv-PC}\xspace}
\newcommand{\empc}{\texttt{EM-PC}\xspace}
\newcommand{\svt}{\texttt{SVT}\xspace}
\newcommand{\asvt}{\texttt{ASVT}\xspace}
\newcommand{\myem}{\texttt{EM}\xspace}
\newcommand{\pc}{\texttt{PC}\xspace}
\newcommand{\aat}{\texttt{AAT}\xspace}
\newcommand{\probe}{\texttt{Probe-and-examine}\xspace}
\newcommand{\linesofcode}{245\xspace}
\newcommand{\countofdata}{4\xspace}
\newcommand{\ubspeedup}{XX\xspace}
\newcommand{\lbspeedup}{XX\xspace}
\newcommand{\Sieve}{\texttt{Sieve-and-examine}}
\newcommand{\sieve}{\texttt{sieve-and-examine}\xspace}
\newcommand{\svtpc}{\texttt{SVT-PC}\xspace}
\newcommand{\code}{\url{XX}\xspace}

\newcommand{\tikzmark}[1]{%
  \tikz[overlay,remember picture,baseline] \node [anchor=base] (#1) {};}

\definecolor{mybrown}{RGB}{255,218,195}
\definecolor{myframe}{RGB}{197,122,195}

\maketitle

\vspace{-10pt}
\begin{abstract}
Causal graph discovery refers to the process of discovering causal relation graphs from purely observational data. 
Like other statistical data, a causal graph might leak sensitive information about participants in the dataset.
In this paper, we present a differentially private causal graph discovery algorithm, \alg, which improves both utility and running time compared to the state-of-the-art.
The design of \alg follows a novel paradigm called \sieve which uses a small amount of privacy budget to filter out ``insignificant'' queries, and leverages the remaining budget to obtain highly accurate answers for the ``significant'' queries.
We also conducted the first sensitivity analysis for conditional independence tests including conditional Kendall's $\tau$ and conditional Spearman's $\rho$.
We 
evaluated \alg on \countofdata public datasets and compared with the state-of-the-art. 
The results show that \alg achieves 10.61 to 32.85 times speedup and better utility. 
%
\end{abstract}

\vspace{-10pt}
\section{Introduction}
\vspace{-5pt}

Causal graph discovery refers to the process of discovering causal relation graphs from purely observational data.
Causal graph discovery has seen wide deployment in areas like genomics, ecology, epidemiology, space physics, clinical medicine, and neuroscience. 
The \pc algorithm~\cite{spirtes2000causation} is one of the most popular causal discovery algorithms.
It is comprised of a series of independence tests like Spearman's $\rho$~\cite{spearman1961proof}, Kendall's $\tau$~\cite{kendall1938new}, G-test~\cite{mcdonald2009handbook} or $\chi^2$-test~\cite{mchugh2013chi}.
The algorithm starts by connecting all variables in the graph.
If an independence test indicates that two variables are independent, the edge between the two variables will be removed from the causal graph.
The process will continue until the edges between independent variables are totally removed.

Like other statistical data, a causal graph can leak information about participants in the dataset.
For instance, Genome-Wide Association Studies involve finding causal relations between Single Nucleotide Polymorphisms (SNPs) and diseases.
In this case, a causal link between a specific SNP and a disease may indicate the participation of a minority patient.
However, the problem of \emph{effective causal graph discovery with  differential privacy} remains largely unsolved.

\vspace{-5pt}
\paragraph{State-of-the-art.} The most straightforward solution is to perturb all the independence tests in the \pc algorithm with calibrated noise such as Laplace or Gaussian noise~\cite{dwork2014algorithmic}.
However, as pointed out in \cite{xu2017differential}, the numerous independence tests incur \emph{too much noise to output meaningful causal graphs}. 
Even tight composition techniques based on R\'enyi differential privacy~\cite{abadi2016deep, mironov2017renyi, wang2018subsampled} cannot address the issue.
The state-of-the-art solution to differentially private causal graph discovery is \empc~\cite{xu2017differential}, a modification of the \pc algorithm which uses the exponential mechanism to guarantee differential privacy.
%
%
Instead of perturbing each independence test with noise, \empc randomly selects how many and which edges to delete using the exponential mechanism. 
In this way, \empc manages to achieve a relative balance between utility and privacy.
However, \empc has two severe defects.
First, \empc suffers from \emph{extremely slow computation} because:
1) many independence tests which should have been pruned have to be computed because the exponential mechanism can only deal with off-line queries;
2) the utility function used in applying the exponential mechanism is computationally intensive.
In fact, the computation overhead of the utility score is so large that the implementation from the original paper~\cite{xu2017differential} uses a greedy search to approximate the solution presented in the paper.
It is unclear whether the differential privacy still holds given this compromise.
Second, \empc also suffers from low utility because it changes the intrinsic workflow of the \pc algorithm.
Concretely, \empc explicitly decides how many edges to delete while PC makes this decision in an on-line fashion.
Thus, \empc does not converge to the \pc algorithm and cannot achieve perfect accuracy even with substantial privacy budget.

\vspace{-7pt}
\paragraph{Proposed solution.} 
In this paper, we proposed \alg, a differentially private causal graph discovery algorithm with \emph{much less running time} and \emph{better result utility} compared to \empc.
The design of \alg follows a novel paradigm called \sieve.
Intuitively, \sieve spends a small amount of privacy budget to filter out ``insignificant'' queries and answers the rest of queries carefully with substantial privacy budget.
The proof that \alg is differentially private is straightforward.
The challenge is to understand why it also gives less running time and better utility.

\Sieve, as the name indicates, comprises two sub-processes executing alternately: \texttt{sieve} and \texttt{examine}.
In the context of causal graph discovery, the \texttt{sieve} process uses sub-sampled sparse vector technique~\cite{dwork2014algorithmic, balle2018privacy} to filter out variable pairs unlikely to be independent with a little privacy budget.
Then the \texttt{examine} process uses Laplace mechanism~\cite{dwork2014algorithmic} to carefully check the remaining variable pairs and decide whether they are really independent with substantial privacy budget.
%

We choose sparse vector technique for its nice properties.
First, sparse vector technique can answer a large number of threshold queries but only pay privacy cost for those whose output is above the threshold\footnote{Sparse vector technique can also only pay for queries below the threshold. For clarity, we only focus on the above-threshold queries throughout the paper.}.
Fortunately, in causal graph discovery, only a few independence tests will yield results above the threshold so with sparse vector technique, we can save much privacy cost.
Second, sparse vector technique can deal with online queries, so redundant independence tests can be pruned adaptively once their target edge is removed due to a previous independence test.
Thus, with sparse vector technique, we can get rid of the unnecessary independence tests in \empc and significantly accelerate private causal discovery.
We propose to further accelerate the execution and reduce privacy cost by augmenting the sparse vector technique using sub-sampling without replacement~\cite{balle2018privacy}.

However, sparse vector technique is known for its poor utility~\cite{lyu2016understanding}, which raises concern about the accuracy of \sieve.
Actually, there exist two types of errors in \sieve.
Type I error refers to mistakenly filtering out truly independent pairs.
Type II error refers to the failure to filter out variable pairs that are not independent.
To reduce the errors, we take a two-step approach.
First, we suppress type I error by tweaking the threshold lower so the noise is more unlikely to flip over the output from independence to the opposite.
The tweak, on the other hand, will increase the number of type II errors.
Fortunately, type II errors can be corrected by the \texttt{examine} process with a high probability.
Furthermore, the threshold tweak typically only increases type II errors slightly because a meaningful threshold should be far away from the clusters of both independent pairs and dependent pairs.

\vspace{-5pt}
\paragraph{Independence tests in Priv-PC.}
The noise magnitude in \alg grows proportionally to the sensitivity of the independence test (Section~\ref{sec:dp}).
Thus, to obtain an appropriate noise level, we conducted rigorous sensitivity analysis for commonly used conditional independence tests including conditional Kendall's $\tau$~\cite{korn1984kendall, taylor1987kendall} and conditional Spearman's $\rho$~\cite{taylor1987kendall} (Appendix~\ref{sec:tau}, \ref{sec:rho}).
%
We finally chose Kendall's $\tau$ in \alg because of its small sensitivity.
It also remains an interesting open question how to integrate independence tests with infinite sensitivity such as G-test~\cite{mcdonald2009handbook} or $\chi^2$-test~\cite{mchugh2013chi} in \alg (Appendix~\ref{sec:aggregate}).

\vspace{-5pt}
\section{Preliminaries}
\vspace{-5pt}

In this section, we review necessary background knowledge about differential privacy and causal graph discovery. 

\subsection{Differential Privacy}
\label{sec:dp}

Differential privacy, formally introduced by Dwork et al.~\cite{dwork2006calibrating} has seen rapid development during the past decade and is accepted as the golden standard for private analysis.
%

\begin{definition} [($\epsilon, \delta$)-differential privacy]
A (randomized) algorithm $\mathcal{A}$ with input domain $D$ and output range $\mathcal{R}$ is ($\epsilon, \delta$)-differentially private if $\forall$ neighboring datasets $\mathcal{D}, \mathcal{D}'\in D$, and $\forall \mathcal{S} \subseteq \mathcal{R}$, we have that:
\begin{equation*}
    \mathbb{P}[\mathcal{A}(\mathcal{D})\in \mathcal{S}] \leq e^{\epsilon}\mathbb{P}[\mathcal{A}(\mathcal{D}')\in \mathcal{S}] + \delta
\end{equation*}
If $\delta=0$, it is called $\epsilon$-differential privacy or pure differential privacy.
\end{definition}

Intuitively, the definition requires a differentially private algorithm to produce similar outputs on similar inputs. 
A common approach to achieving differential privacy is to perturb the output with noise. 
The noise is carefully calibrated to appropriately mask the maximum difference of the output defined as sensitivity.

\begin{definition}[$\ell_k$-sensitivity] The $\ell_k$-sensitivity of a function $f: D\rightarrow \mathcal{R}$ is:
\begin{equation*}
\Delta f = \max_{x,y\in\mathcal{D}, \|x-y\|=1} \|f(x)-f(y)\|_k 
\end{equation*}
\end{definition}

Since all the independence tests in this paper output scalars, we omit the used norm and refer to the value as sensitivity uniformly.

Composability is an important property of differential privacy.
If several mechanisms are differentially private, so is their composition.
The privacy parameters of the composed mechanism can be derived using standard composition theorem like advanced composition~\cite{dwork2014algorithmic} and moments accountant~\cite{abadi2016deep}.
The sparse vector technique~\cite{dwork2014algorithmic} can be viewed as a special case for composition because it can answer a large number of threshold queries while only paying privacy cost for queries above the threshold.
We refer the interested readers to Appendix~\ref{sec:em_svt} for more details.

\subsection{Causal Graph Discovery}

In statistics, causal graphs are \emph{directed acyclic graphs} (DAGs) used to encode assumptions about the data-generating process, which are formally defined as follows.
\begin{definition}[Causal Graph] A causal graph $\mathcal{G}$ is a directed acyclic graph (DAG) represented by a vertex set $V=\{v_1, v_2, \cdots, v_k\}$ and an edge set $E \subseteq V\times V$. $Adj(\mathcal{G}, v_i)$ represents the adjacent set of node $v_i$ in graph $\mathcal{G}$. The skeleton of a DAG is the undirected version of the graph.
\end{definition}
Causal graph discovery refers to the process of discovering causal graphs under an observed distribution such as a dataset.
The output of a causal graph discovery algorithm is a \emph{completed, partially directed acyclic graph} (CPDAG) because the directions of some edges cannot be determined only based on the observational distribution.

%
%
There exist a variety of causal graph discovery algorithms and the \pc algorithm is one of the most popular ones.
The first step in the \pc algorithm is to find the skeleton of the causal graph using conditional independence tests. 
Then the edges are directed based on some auxiliary information from the independence tests to obtain CPDAG.
Because the second step does not touch the data, we only focus on the first step given the post-processing theorem~\cite{dwork2014algorithmic} in differential privacy.
The details of the \pc algorithm is introduced in Section~\ref{sec:privpc} and Appendix~\ref{sec:pc_code}.


\subsection{Conditional Independence Test}

Conditional independence test is an important building block in many causal discovery algorithms.
It is used to test whether two random variables are independent conditional on another set of variables.

\begin{definition}[Conditional independence test]
A conditional independence test $f: V\times V\times 2^V\times D\rightarrow \{0, 1\}$ decides whether variable $i\neq j \in V$ are independent conditional on another set of variables $k\subseteq V, i, j \notin k$. $f$ is composed of a dependence score $s: V\times V\times 2^V\times D \rightarrow \mathbb{R}$ and a threshold $T\in \mathbb{R}$.
\begin{equation*}
  f(\mathcal{D}) =
    \begin{cases}
      0, & s(\mathcal{D})\leq T\\
      1, & s(\mathcal{D}) > T
    \end{cases}  
\end{equation*},
where $1$ represents ``independent'' and $0$ represents ``not independent''.
$f$ is called $|k|$-order conditional independence test where $|k|$ is the size of the conditional set.
\end{definition}

Commonly used independence tests include Spearman's $\rho$, Kendall's $\tau$, G-test and $\chi^2$-test.
Note that some independence tests like Kendall's $\tau$ output $1$ when the dependence score is below the threshold.
However, for clarity, we assume all the independence tests output $1$ when the dependence score is above the threshold without loss of generality.
In this paper, we focus on Kendall's $\tau$ because of its small sensitivity (Section~\ref{sec:idpt}).
\vspace{-5pt}
\section{Differentially Private Causal Graph Discovery}
\vspace{-5pt}

In this section, we proposed \alg to effectively discover causal graphs following \sieve paradigm.
Concretely, \alg leverages the \texttt{sieve} process to sift out variable pairs unlikely to independent using a little privacy cost and then carefully \texttt{examine}s the remaining ones with substantial privacy budget.
We first introduce \sieve mechanism and then demonstrate how to apply \sieve to the \pc algorithm to obtain \alg.
At last, we bridge \sieve and \alg by providing sensitivity analysis for Kendall's $\tau$.

\vspace{-5pt}
\subsection{Sieve-and-examine Mechanism}
\label{sec:probe}
\vspace{-5pt}


Most causal graph discovery algorithms like the \pc algorithm need to answer many independence tests --  \emph{too many to obtain an acceptable privacy guarantee} using independent perturbation mechanisms like Laplace mechanism~\cite{dwork2014algorithmic}.
\empc is the first step towards reconciling the contradiction between utility and privacy in private causal discovery.
However, \empc suffers from \emph{extremely slow running time} because it additionally runs a large number of independence tests that should have been pruned.
A straightforward solution is to replace the exponential mechanism~\cite{dwork2014algorithmic} with the sparse vector technique~\cite{dwork2014algorithmic, lyu2016understanding}.
Sparse vector technique allows adaptive queries so unnecessary independence tests can be pruned early. 
Besides, the privacy cost of sparse vector technique only degrades with the number of queries above the threshold.
Fortunately, only a few independence tests in causal discovery yield values above the threshold so the sparse vector technique can also save considerable privacy budget in causal discovery. 
However, sparse vector technique suffers from \emph{low accuracy} as pointed out in~\cite{lyu2016understanding}, which is not acceptable in many use cases such as medical or financial analysis.

To address the issue, we propose a novel paradigm called \sieve which alternately executes sub-sampled sparse vector technique and output perturbation.
Intuitively, the \texttt{sieve} process uses sub-sampled sparse vector technique to filter out independence tests unlikely to be above the threshold with small privacy budget.
Then the left queries are \texttt{examined} carefully with substantial privacy budget using output perturbation.

\paragraph{One-off
sieve-and-examine.}

For simplicity, we first introduce one-off \sieve shown in Algorithm~\ref{alg:onesieve}, a simplified version of \sieve that halts after seeing one query above the threshold. 
We prove that one-off \sieve is $\epsilon$-differentially private.
The result can be generalized to multiple above-threshold queries using composition theorem.

\begin{algorithm}
\SetKwInOut{Input}{Input}
\SetKwInOut{Output}{Output}
\SetKwProg{Fn}{Function}{:}{\KwRet}
\SetAlgoLined
\KwIn{$\mathcal{D}$: dataset, $\{f_i\}$: queries, $T$: threshold, $t$: threshold tweak, $m$: subset size, $\epsilon$: privacy parameters, $\Delta$: sensitivity of $f$}.
\Fn{Sieve\_and\_examine($\mathcal{D},\{f_i\},T,t,m,\epsilon, \Delta$)}{
    $\mathcal{D}' \overset{\$}{\leftarrow} \mathcal{D}, n=|\mathcal{D}|, m=|\mathcal{D}'|$\;
    Let $\epsilon' = \ln(\frac{n}{m}(e^{\epsilon/2}-1)+1)$\;
    Let $\hat{T}=T-t+Lap(\frac{2\Delta}{\epsilon'})$\;
    \For{Each query $i$}{
        \If{$f_i(\mathcal{D}')+Lap(\frac{4\Delta}{\epsilon'})\geq\hat{T}$}{
            Let $k=i$\;
            Break\;
        }
    }
    \lIf
    {$f_k(\mathcal{D})+Lap(\frac{2\Delta}{\epsilon})\geq T$}{
        Output k
    } \lElse{Output $\bot$}
}
\caption{One-off sieve-and-examine mechanism.}
\label{alg:onesieve}
\end{algorithm}

\begin{thm}
Algorithm \ref{alg:onesieve} is $\epsilon$-differentially private.
\label{thm:dpaat}
\end{thm}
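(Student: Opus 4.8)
The plan is to decompose Algorithm~\ref{alg:onesieve} into two stages --- the \texttt{sieve} stage (the subsampling together with the above-threshold loop, which emits the index $k$) and the \texttt{examine} stage (the final perturbed comparison $f_k(\mathcal{D})+\mathrm{Lap}(2\Delta/\epsilon)\ge T$, which emits $k$ or $\bot$) --- to show each stage is $(\epsilon/2)$-differentially private, and to conclude by adaptive basic composition for pure differential privacy.

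For the \texttt{sieve} stage I would first condition on the drawn subsample $\mathcal{D}'$ and argue that the loop is $\epsilon'$-differentially private \emph{with respect to $\mathcal{D}'$}. This is exactly the classical \texttt{AboveThreshold}/\texttt{SVT} mechanism that returns the first above-threshold index: the threshold is perturbed by $\mathrm{Lap}(2\Delta/\epsilon')$ and each query by $\mathrm{Lap}(4\Delta/\epsilon')$, and since every $f_i$ has sensitivity $\Delta$, the standard analysis (budget $\epsilon'/2$ for the threshold noise and $\epsilon'/2$ for the query noise, paying only for the single above-threshold query) gives $\epsilon'$-DP; emitting $k$ is equivalent to emitting the stream $\bot,\dots,\bot,\top$, so there is no additional cost, and the constant shift $t$ affects neither sensitivity nor privacy. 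Then I would invoke privacy amplification by subsampling without replacement~\cite{balle2018privacy}: precomposing an $\epsilon'$-DP mechanism with a uniform size-$m$ subsample of an $n$-point dataset yields an $\epsilon''$-DP mechanism with $e^{\epsilon''}-1=\tfrac{m}{n}\bigl(e^{\epsilon'}-1\bigr)$. Substituting the algorithm's choice $\epsilon'=\ln\!\bigl(\tfrac{n}{m}(e^{\epsilon/2}-1)+1\bigr)$ gives $e^{\epsilon'}-1=\tfrac{n}{m}(e^{\epsilon/2}-1)$, hence $e^{\epsilon''}-1=e^{\epsilon/2}-1$, i.e.\ $\epsilon''=\epsilon/2$. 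So the \texttt{sieve} stage, as a function of $\mathcal{D}$, is $(\epsilon/2)$-DP.

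For the \texttt{examine} stage, $f_k$ has sensitivity $\Delta$ and the comparison uses $f_k(\mathcal{D})+\mathrm{Lap}(2\Delta/\epsilon)$, i.e.\ the Laplace mechanism with scale $\Delta/(\epsilon/2)$, which is $(\epsilon/2)$-DP; thresholding the noisy value against $T$ and outputting either $k$ or $\bot$ is post-processing and preserves $(\epsilon/2)$-DP for every fixed $k$. Since the only quantity the \texttt{sieve} stage passes to the \texttt{examine} stage is the index $k$, the whole of Algorithm~\ref{alg:onesieve} is the adaptive composition of an $(\epsilon/2)$-DP mechanism with an $(\epsilon/2)$-DP mechanism, hence $\epsilon$-DP by basic composition.

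The main obstacle I anticipate is the \texttt{sieve} analysis: one must apply the \texttt{SVT} argument with the precise noise scales so that the ``pay only for the one above-threshold query'' accounting is valid (the literature contains several subtly incorrect variants of \texttt{SVT}), and one must state the subsampling-amplification lemma in the form that matches the neighboring relation used here and then verify the short computation that turns $\epsilon'$ back into $\epsilon/2$. The Laplace step and the final composition are routine.
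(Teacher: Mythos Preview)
Your proposal is correct and follows essentially the same approach as the paper's proof sketch: decompose into \texttt{sieve} (SVT on the subsample giving $\epsilon'$-DP, amplified to $\epsilon/2$-DP via the subsampling result of~\cite{balle2018privacy}) and \texttt{examine} (Laplace mechanism at $\epsilon/2$), then compose. Your write-up simply fills in the details the paper leaves implicit, including the verification that the chosen $\epsilon'$ inverts the amplification formula to yield exactly $\epsilon/2$.
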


\begin{proof}[Proof Sketch]
We separately prove that \texttt{sieve} and \texttt{examine} are both $\epsilon/2$-differentially private.
The main body of \texttt{sieve} is a sparse vector technique with $\epsilon'=\ln(\frac{n}{m}(e^{\epsilon/2}-1)+1)$ privacy cost.
Sub-sampling reduces the cost to $\epsilon/2$ following Theorem 9 from~\cite{balle2018privacy}. 
\texttt{Examine} process is a $\epsilon/2$-differentially private Laplace mechanism.
Thus, \sieve is $\epsilon$-differentially private using composition theorem.
\end{proof}

\paragraph{Result Utility.} 
The differential privacy proof is straightforward.
The challenge will be to understand when it also gives utility.
Thus, we bound the probability of type I error and type II error in Algorithm~\ref{alg:onesieve} separately and provide the proof in Appendix~\ref{sec:error}.

\begin{thm}[Error bound]~

\begin{itemize}
    \setlength\itemsep{-1em}
    \item (Type I error) Let $E_1^\alpha$ denotes the event that Algorithm~\ref{alg:onesieve} filters out $f(\mathcal{D})\geq T+\alpha$.
    $$\mathbb{P}[E_1^\alpha] \leq \exp(-\frac{\epsilon'(\alpha+t)}{6\Delta}) - \frac{1}{4}\exp(-\frac{\epsilon'(\alpha+t)}{3\Delta})$$.

    \item (Type II error) Let $E_2^\alpha$ denotes the event that Algorithm~\ref{alg:onesieve} fails to filter out $f(\mathcal{D})\leq T-\alpha$. 
    If $\alpha \geq t$, then
    $$\mathbb{P}[E_2^\alpha]\leq \exp(-\frac{12\epsilon\alpha+\epsilon'(\alpha-t)}{6\Delta}) - \frac{1}{4}\exp(-\frac{6\epsilon\alpha+\epsilon'(\alpha-t)}{3\Delta})$$.
\end{itemize}
\label{thm:error}
\vspace{-10pt}
\end{thm}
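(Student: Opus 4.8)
Both bounds reduce to tail estimates for the Laplace variables injected in Algorithm~\ref{alg:onesieve}, so the plan is to (i) rewrite each error event as an explicit event on those noises, (ii) control the tail of a \emph{difference} of two independent Laplace variables with unequal scales, and (iii) fold in the independent examine noise. Throughout I would treat the sieve's subsample value $f(\mathcal{D}')$ as the hypothesised full-data value: for the rank-based statistics considered here the test statistic estimates an essentially sample-size-independent quantity, so any residual gap can be absorbed into $\alpha$.

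For Type I, write $N_T\sim\mathrm{Lap}(2\Delta/\epsilon')$ for the threshold noise and $N_q\sim\mathrm{Lap}(4\Delta/\epsilon')$ for the query noise. A query with $f(\mathcal{D})\ge T+\alpha$ is discarded by the sieve exactly when $f(\mathcal{D}')+N_q<T-t+N_T$, i.e.\ when $N_T-N_q>\alpha+t$, so $\mathbb{P}[E_1^\alpha]\le\mathbb{P}[N_T-N_q>\alpha+t]$. The key step is to split $\{N_T-N_q>\alpha+t\}\subseteq\{N_T>\tfrac{\alpha+t}{3}\}\cup\{-N_q>\tfrac{2(\alpha+t)}{3}\}$, the weights $\tfrac13,\tfrac23$ being chosen so that the two one-sided Laplace tails carry the \emph{same} exponent $\tfrac{\epsilon'(\alpha+t)}{6\Delta}$; each tail then equals $\tfrac12 x$ with $x=\exp(-\tfrac{\epsilon'(\alpha+t)}{6\Delta})$. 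Using independence of $N_T$ and $N_q$ and exact inclusion--exclusion, $\mathbb{P}[A\cup B]=\mathbb{P}[A]+\mathbb{P}[B]-\mathbb{P}[A]\mathbb{P}[B]=x-\tfrac14 x^2$, which is precisely the claimed bound.

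For Type II, a query with $f(\mathcal{D})\le T-\alpha$ can be output only if it passes \emph{both} stages; since the two stages use disjoint noise, $\mathbb{P}[E_2^\alpha]\le\mathbb{P}[\text{passes sieve}]\cdot\mathbb{P}[\text{passes examine}]$ (dropping the extra requirement that it be the first query to pass only enlarges the event, so the inequality is safe). Passing the sieve forces $N_q-N_T\ge\alpha-t$, a nonnegative tail exactly when $\alpha\ge t$ --- this is where the hypothesis enters --- and I would bound it by the same split/inclusion--exclusion argument, now with weights $\tfrac23,\tfrac13$ (the two scales swap roles relative to Type I), giving $x-\tfrac14 x^2$ with $x=\exp(-\tfrac{\epsilon'(\alpha-t)}{6\Delta})$. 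Passing the examine forces $\mathrm{Lap}(2\Delta/\epsilon)\ge T-f(\mathcal{D})\ge\alpha$, a single Laplace tail. Multiplying the two factors and collecting the exponents over a common denominator yields the stated two-term bound.

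The routine part is the arithmetic of these constants; the step that actually needs care is the weight-balancing in the difference-of-Laplaces tail --- picking $\lambda$ so both one-sided tails share an exponent, otherwise one gets a messier and looser estimate --- together with extracting the $-\tfrac14 x^2$ correction through exact inclusion--exclusion rather than a plain union bound. The secondary subtlety is the $f(\mathcal{D}')$-versus-$f(\mathcal{D})$ substitution flagged above, which I would justify via the sample-size invariance of the rank-correlation statistics used in \alg.
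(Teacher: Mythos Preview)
Your proposal is correct and matches the paper's argument essentially step for step: the paper lower-bounds $\mathbb{P}[\neg E_1^\alpha]$ by the intersection of the two one-sided ``small noise'' events with the same $\tfrac13,\tfrac23$ split you identified, which is the De Morgan dual of your union/inclusion--exclusion upper bound on $\mathbb{P}[E_1^\alpha]$, and for Type~II it likewise factors into the sieve tail times the single examine Laplace tail. Your explicit remark on the $f(\mathcal{D}')$-versus-$f(\mathcal{D})$ gap is a subtlety the paper's proof simply elides.
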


Intuitively, theorem~\ref{thm:error} bounds the probability of errors conditional on the distance from the dependence score to the threshold.
An interesting observation is the tweak on the threshold $t$ decreases the probability of type I errors and increases the probability of type II errors at the same time.
Because each type II error increases the privacy cost by $\epsilon$, the question is ``\emph{will the increment of type II errors add too much privacy cost?}''
Fortunately, the answer is ``no'' because the increment of type II errors also depends on the distribution of dependence scores.
Generally the empirical distribution of an independence score is a twin-peak curve and the threshold locates in the middle valley.
In this case, the threshold tweak only slightly increases the number of type II errors because most dependence scores are far from the threshold\footnote{A complete explanation contains two parts. First, since most of the dependence scores are far from the threshold, the threshold tweak does not directly change the test results for most queries. Second, because the dependence scores are far from the threshold, the absolute increase of type II error probability is small. Thus, the increment of type II errors is small.}.

\subsection{Priv-PC Algorithm}
\label{sec:privpc}

In this section, we demonstrate how to apply \sieve to \pc algorithm to obtain \alg. 
We first give an overview of \alg.
Then we discuss how to optimize the sub-sampling rate in \alg.


\paragraph{\alg algorithm.}


The complete pseudo-code for \alg is shown in Algorithm~\ref{alg:pc}.
\alg follows the same workflow as the \pc algorithm.
It starts from a complete undirected graph (line 1) and gradually increases the order of the independence tests (line 6, 17).
Within a fixed order, \alg traverse all the variable pairs with large enough adjacent set (line 8).
It selects the conditional variables from the adjacent set (line 9-10) and then executes the conditional independence test to decide whether the edge will be removed from the graph.

To achieve differential privacy, the conditional independence tests are augmented with \sieve.
Concretely, \alg first sub-samples a subset $\mathcal{D}'$ from $\mathcal{D}$, derives privacy parameter for the \texttt{sieve} process and tweaks the threshold (line 3-5).
Then, \alg executes the \texttt{sieve} process by adding noise to both the tweaked threshold (line 5) and the independence test (line 11). 
Note that the noise parameters here are different from standard \sieve (Algorithm~\ref{alg:onesieve}) because the sensitivity for Kendall's $\tau$ is dependent on the dataset size (Section~\ref{sec:idpt}).
Once an independence test on the sub-sampled dataset exceeds the threshold (line 11), the \texttt{examine} process will run the independence test again on the complete dataset with substantial privacy budget.
If the result still exceeds the un-tweaked threshold (line 12), the edge is removed from the graph (line 13).
Then, the sub-sampled dataset and the threshold are refreshed for the next round of \sieve (line 14-15).



\vspace{-10pt}
\begin{frame}

\begin{algorithm}
\DontPrintSemicolon
\SetKwInOut{Input}{Input}
\SetKwInOut{Output}{Output}
\SetKwProg{Fn}{Function}{:}{\KwRet}
\SetAlgoLined
\KwIn{$V$: vertex set, $\mathcal{D}$: dataset, $T$: threshold, $t$: threshold tweak, $m$: subset size, $\epsilon$: privacy parameter, $\Delta$: sensitivity on the full dataset.}
\Fn{Priv\_PC($V, \mathcal{D},T,t,m,\epsilon, \Delta$)}{
$\mathcal{G}$ = complete graph on V, $ord$ = 0\;
\tikzmark{a}$\mathcal{D}' \overset{\$}{\leftarrow} \mathcal{D}, n=|\mathcal{D}|, m=|\mathcal{D}'|$\; 
Let $\epsilon' = \ln(\frac{n}{m}(e^{\epsilon/2}-1)+1)$\;
Let $\hat{T} = T-t+Lap(\frac{2\sqrt{n}\Delta}{\sqrt{m}\epsilon'})$\tikzmark{b}\;
\While{$\exists~v_i~s.t.~ |Adj(\mathcal{G}, v_i)- v_j|\geq \text{ord}$}{
    \While{$\exists$ edge $(v_i, v_j)~s.t.~|Adj(\mathcal{G}, v_i)-v_j| \geq ord$ that has not been tested}{
        select edge $(v_i, v_j)$ in $\mathcal{G}~s.t.~|Adj(\mathcal{G}, v_i)-v_j| \geq ord$\;
        \While{$\exists~S \subseteq Adj(\mathcal{G}, v_i)-v_j$ that has not been tested}{
            choose $S \subseteq Adj(\mathcal{G}, v_i)-v_j$, $|S|=ord$\;
            \tikzmark{c}\If{$\tau(ij|S)$ $+ Lap(\frac{4\sqrt{n}\Delta}{\sqrt{m}\epsilon'}) \geq \hat{T}$}{
                \If{$\tau(ij|S) + Lap(\frac{2\sqrt{n}\Delta}{\sqrt{m}\epsilon})\geq T$}{
                    delete $(v_i, v_j)$ from $\mathcal{G}$\;
                }
                $\mathcal{D}' \overset{\$}{\leftarrow} \mathcal{D}, |\mathcal{D}'|=m$\;
                $\hat{T} = T-t+Lap(\frac{2\Delta}{\epsilon'})\tikzmark{d}$\;
                break\;
            }
        }    
    }
    $ord$ = $ord$ + 1\;}
    Output $\mathcal{G}$, compute the total privacy cost $(\epsilon_{tot}, \delta_{tot})$ with advanced composition.\;
}
\caption{\alg Algorithm with Kendall's $\tau$. The highlighted parts are different from \pc algorithm.}
\label{alg:pc}
\end{algorithm}

\begin{tikzpicture}[remember picture,overlay]
\coordinate (aa) at ($(a)+(0.1,0.3)$);
\coordinate (bb) at ($(b)+(-0.1,-0.05)$);
\node[draw=myframe,line width=1pt,fill=mybrown,opacity=0.4,rectangle,rounded corners,fit=(aa) (bb)] {};
\coordinate (cc) at ($(c)+(0.1,0.3)$);
\coordinate (dd) at ($(d)+(1.6,-0.05)$);
\node[draw=myframe,line width=1pt,fill=mybrown,opacity=0.4,rectangle,rounded corners,fit=(cc) (dd)] {};
\end{tikzpicture}
\vspace{-15pt}
\end{frame}


\vspace{-5pt}
\paragraph{Optimize sub-sampling rate in \alg.}
In Algorithm~\ref{alg:pc}, we require the caller of the function to explicitly give the size of the sub-sampling set.
However, since the sensitivity of Kendall's $\tau$ also depends on the data size (Section~\ref{sec:idpt}), we can actually derive an optimal sub-sampling size which adds the smallest noise under the same privacy guarantee. 
This requires to minimize the noise level $\frac{\sqrt{n/m}}{\ln(\frac{n}{m}(\exp(\epsilon/2)-1)+1)}$.
Although there is no explicit solution for the optimization problem, we can obtain an approximate solution with numerical solver such as BFGS~\cite{nocedal2006numerical}.
On the other hand, when $\epsilon$ is small, the optimal sub-sampling size is also too small to yield meaningful independence test results.
Thus we take the optimal sub-sampling size by clipping the solution to range $(\frac{n}{20}, n)$.

\vspace{-5pt}
\subsection{Independence Tests in Priv-PC}
\label{sec:idpt}

The last missing piece is the sensitivity of the conditional independence test functions.
We finally choose conditional Kendall's $\tau$ for its small sensitivity.
Conditional Spearman's $\rho$ is another candidate but it can only be used on large datasets because of the large coefficient in its sensitivity (Appendix~\ref{sec:rho}).

The sensitivity of Kendall's $\tau$ is inversely proportional\footnote{Note that this requires the size of the dataset to be public which is a common case.} to the training set size as pointed out in~\cite{kusner2015private}. 
However, in our scenario, the conditional version of Kendall's $\tau$ is needed while \cite{kusner2015private} only gives the sensitivity for non-conditional Kendall's $\tau$. 
In order to fill the gap, we derive the sensitivity of the conditional Kendall's $\tau$, and leave the proof to Appendix~\ref{sec:tau}.

\begin{thm}
(Sensitivity of conditional Kendall's $\tau$.) The sensitivity of conditional Kendall's $\tau$ in Definition~\ref{def:ctau} (Appendix~\ref{sec:tau}) is $\frac{c_1}{\sqrt{n}}$ where $n$ is the size of the input dataset and $c_1$ is an explicit constant approaching $9/2$ when the dataset size grows.
\label{thm:stau}
\end{thm}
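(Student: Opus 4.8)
The plan is to unfold Definition~\ref{def:ctau} and reduce the claim to a single-record perturbation estimate. Conditional Kendall's $\tau$ on a conditioning set $S$ is assembled from ordinary Kendall's $\tau$ statistics of the sub-samples sharing a common value (block) of $S$, combined by the prescribed weights and then re-scaled by (the square root of) its null variance; the latter is a known rational function of $n$ and the block sizes whose reciprocal square root is $\tfrac{3}{2}\sqrt{n}\,(1+o(1))$, since the non-conditional Kendall null variance is $\tfrac{2(2n+5)}{9n(n-1)}$. Writing each block statistic as $\tfrac{2}{m(m-1)}\sum_{u<v}\mathrm{sgn}(x_u-x_v)\,\mathrm{sgn}(y_u-y_v)$ over the $m$ records of that block, the whole problem becomes: bound how much the normalized combination moves when one record changes, then multiply by this $\Theta(\sqrt n)$ factor.

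Concretely, fix neighboring $\mathcal D,\mathcal D'$ differing in one record $r\mapsto r'$; this record lies in block $b_0$ under $\mathcal D$ and in block $b_1$ under $\mathcal D'$, with $b_0=b_1$ exactly when $r,r'$ agree on the $S$-coordinates. I would split $\mathrm{score}(\mathcal D)-\mathrm{score}(\mathcal D')$ into the change of the $b_0$-term (deleting $r$ from a block of size $|b_0|$), the change of the $b_1$-term (inserting $r'$), the induced change of the blend weights caused by the altered block sizes, and the change of the null-variance normalization. For the first two, the moved record meets only $m-1$ pairs inside a block of size $m$, each sign-product moves by at most $2$, so the un-normalized block sum moves by at most $2(m-1)$; feeding this through the $\tfrac1{m(m-1)}$ factor, the weights, and the re-normalization gives an explicit function of the block sizes and $n$. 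Maximizing that function over all admissible partitions with $\sum_b n_b=n$ pins down the extremal configuration and produces an explicit $c_1(n)$; sending $n\to\infty$ yields $9/2$. The degenerate case $S=\varnothing$ (one block) should recover the non-conditional constant of~\cite{kusner2015private}; the conditional case differs only through the blockwise bookkeeping, and a parallel computation handles conditional Spearman's $\rho$ with a larger constant, as noted in the paper.

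The real work, and the main obstacle, is making the perturbation estimate \emph{tight}: summing the four pieces with worst-case signs gives a bound several times larger than $9/2$, so one has to argue that the sign-product flips and the weight/normalization shifts cannot all be adversarial at once --- e.g.\ exhibiting orderings consistent with transitivity that simultaneously realize the claimed flips, and showing that when they do, the normalization change works against the numerator change. Second, the small blocks behave differently and must be enumerated: a block that shrinks to one record drops out entirely, and a size-two block has $\tau_b\in\{-1,+1\}$, so these configurations have to be checked separately and shown not to beat the generic one. Third, I would carry the finite-$n$ value $c_1(n)$ through explicitly --- including whether the paper's neighboring relation is ``replace one record'' or ``add/remove one record'', which perturbs the constant --- and verify that $c_1(n)$ converges monotonically to $9/2$, so that a single uniform sensitivity bound is valid for all $n$ above a small threshold; and I would double-check that the normalization used in Definition~\ref{def:ctau} is the exact null variance rather than its large-$n$ surrogate, since the leading coefficient, hence the limit $9/2$, depends on that choice.
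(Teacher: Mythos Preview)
Your plan would work, but it is considerably more elaborate than the paper's argument, and your worry about tightness is self-inflicted. The paper does not go back to sign-products, does not optimize over partitions, and does not need any cancellation between pieces. Instead it treats each block statistic $\tau_i$ as a black box with sensitivity $2/(n_i-1)$ (already established in the unconditional case), writes conditional $\tau$ as $A_i\tau_i$ with $A_i = w_i/\sqrt{\sum_j w_j}$ for the single affected block, and applies the crude product bound
\[
s(A_i\tau_i)\ \le\ (A_i+s(A_i))(\tau_i+s(\tau_i))-A_i\tau_i\ =\ A_i\,s(\tau_i)+s(A_i)\,\tau_i+s(A_i)\,s(\tau_i).
\]
The weight formula $w_i=\frac{9n_i(n_i-1)}{2(2n_i+5)}$ gives the elementary sandwich $c_2(n_i-1)\le w_i\le \tfrac{9}{4}(n_i-1)$ (with $c_2\to 9/4$ as the minimum block size grows), whence $A_i\le \frac{9(n_i-1)}{4\sqrt{c_2(n-k)}}$ and $s(A_i)\le \frac{9}{4\sqrt{c_2(n-k)}}$. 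Plugging in, the two leading terms sum to $\frac{27}{4\sqrt{c_2(n-k)}}$, and sending $c_2\to 9/4$, $k/n\to 0$ yields $\frac{9}{2\sqrt n}$. So the constant $9/2$ is simply the limit of this straightforward worst-case sum---no adversarial-alignment analysis is needed, and the paper never claims the constant is sharp.

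Your route does buy something the paper's does not: you explicitly allow the perturbed record to change its conditioning value and hop between blocks $b_0$ and $b_1$, which the paper's proof silently ignores (it bounds only one $A_i$ and one $\tau_i$ moving). That is why your four-term decomposition looks like it overshoots $9/2$. If your goal is to reproduce the paper's statement, drop the block-hopping case and the first-principles sign-product computation, and just chain the unconditional block sensitivity through the weight-and-normalize map; the identification $\sqrt{\sum_j w_j}\sim \tfrac{3}{2}\sqrt n$ you already noted is exactly where the $9/2$ comes from. If instead you want a genuinely complete sensitivity bound covering block-switching, your more careful decomposition is the right tool, but expect a larger constant than $9/2$ without further structural assumptions.
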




\vspace{-5pt}
\section{Evaluation}
\vspace{-5pt}

In this section, we evaluate the effectiveness of \alg by answering the following two questions. 1) How accurate is the result of \alg? 2) How much running time does \alg save?

\subsection{Experiment Setup}

In order to answer the above questions, we selected \countofdata datasets. 
The detailed information about the datasets is shown in Table~\ref{tab:dataset}.

\begin{table}[htbp]
    \centering
    \caption{Datasets used in the evaluation.}
    \label{tab:dataset}
    \begin{tabular}{|l|c|c|c|c|}
    \hline
    Dataset & \# Features & \# Samples & \# Edges & Type\\
    \hline
    Earthquake~\cite{korb2010bayesian} & 5 & 100K & 4 & Binary\\
    \hline
    Cancer~\cite{korb2010bayesian} & 5 & 100K & 4 & Binary\\
    \hline
    Asia~\cite{lauritzen1988local} & 8 & 100K & 10 & Binary\\
    \hline
    Survey~\cite{scutari2014bayesian} & 6 & 100K & 6 & Discrete \\
    \hline
    
    \end{tabular}
\end{table}

To directly compare \empc and \alg, we ran the two algorithms on the datasets
with 21 different privacy parameters and presented the results with accumulated privacy cost between $1$ and $100$. 
Furthermore, to demonstrate the utility improvement due to \sieve, we also directly applied sparse vector technique to \pc algorithm (\svtpc) and evaluated it under the same setting.
For each privacy parameter, we ran the three algorithms for 5 times and recorded the mean and standard deviation of the utility of the output graph and the running time.
Utility is measured in terms of $F1$-score\footnote{
If a causal graph discovery outputs $\mathcal{G}=(V, E)$ and the ground truth
is $\mathcal{G'}=(V, E')$. Then $F1$-score is defined as:
$
 F1 = \frac{2\cdot \text{Precision} \cdot \text{Recall}}{\text{Precision} + \text{Recall}},     \text{Precision} = \frac{|E\cap E'|}{|E|},
    \text{Recall} = \frac{|E\cap E'|}{|E'|}
$
}.
%
%
All the experiments were run on a Ubuntu18.04 LTS server with 32 AMD Opteron(TM) Processor 6212 with 512GB RAM.

\subsection{Utility}

\newcommand{\figwidth}{0.24\textwidth}
\tikzset{font={\fontsize{20pt}{12}\selectfont}}
\begin{figure}[htbp]
     \centering
     \begin{subfigure}[b]{\figwidth}
         \centering
         \resizebox{\textwidth}{!}{\begin{tikzpicture}
\begin{axis}[
  grid=major,
  ymin=0, ymax=1.2, xmin=1, xmax=100, xmode=log,
  ytick align=outside, ytick pos=left,
  xtick align=outside, xtick pos=left,
  xlabel=Total $\epsilon$,
  ylabel={F1 Score},
  legend pos=north west,
  legend style={draw=none}]
\addplot+[
  red, mark options={scale=0.75},
  smooth, 
  error bars/.cd, 
    y fixed,
    y dir=both, 
    y explicit
] table [x=x, y=y,y error=error, col sep=comma] {data/earthquake/privpc.txt};
\addlegendentry{Priv-PC}
\addplot+[
  blue, mark=square*, mark options={scale=0.75},
  smooth,
  dashed,
  error bars/.cd, 
    y fixed,
    y dir=both, 
    y explicit
] table [x=x, y=y,y error=error, col sep=comma] {data/earthquake/empc.txt};
\addlegendentry{EM-PC}
\addplot+[
  orange, mark=square*, mark options={scale=0.75},
  smooth,
  densely dashdotted,
  error bars/.cd, 
    y fixed,
    y dir=both, 
    y explicit
] table [x=x, y=y,y error=error, col sep=comma] {data/earthquake/svt.txt};
\addlegendentry{SVT-PC}
\addplot+[
  gray, mark=square*, mark options={scale=0.75},
  smooth,
  densely dotted,
  error bars/.cd, 
    y fixed,
    y dir=both, 
    y explicit
] table [x=x, y=y,y error=error, col sep=comma] {data/earthquake/sub_gaussian.txt};
\addlegendentry{SG}
\end{axis}
\end{tikzpicture}}
         \caption{Earthquake.}
         \label{fig:earthquake}
     \end{subfigure}
     \hfill
     \begin{subfigure}[b]{\figwidth}
         \centering
         \resizebox{\textwidth}{!}{\begin{tikzpicture}
\begin{axis}[
  grid=major,
  ymin=0, ymax=1.2, xmin=1, xmax=100, xmode=log,
  ytick align=outside, ytick pos=left,
  xtick align=outside, xtick pos=left,
  xlabel=Total $\epsilon$,
  ylabel={F1 Score},
  legend pos=north west,
  legend style={draw=none}]
\addplot+[
  red, mark options={scale=0.75},
  smooth, 
  error bars/.cd, 
    y fixed,
    y dir=both, 
    y explicit
] table [x=x, y=y,y error=error, col sep=comma] {data/cancer/privpc.txt};
\addplot+[
  blue, mark=square*, mark options={scale=0.75},
  smooth,
  dashed,
  error bars/.cd, 
    y fixed,
    y dir=both, 
    y explicit
] table [x=x, y=y,y error=error, col sep=comma] {data/cancer/empc.txt};
\addplot+[
  orange, mark=square*, mark options={scale=0.75},
  smooth,
  densely dashdotted,
  error bars/.cd, 
    y fixed,
    y dir=both, 
    y explicit
] table [x=x, y=y,y error=error, col sep=comma] {data/cancer/svt.txt};
\addplot+[
  gray, mark=square*, mark options={scale=0.75},
  smooth,
  densely dotted,
  error bars/.cd, 
    y fixed,
    y dir=both, 
    y explicit
] table [x=x, y=y,y error=error, col sep=comma] {data/cancer/sub_gaussian.txt};
\end{axis}
\end{tikzpicture}}
         \caption{Cancer.}
         \label{fig:cancer}
     \end{subfigure}
     \hfill
     \begin{subfigure}[b]{\figwidth}
         \centering
         \resizebox{\textwidth}{!}{\begin{tikzpicture}
\begin{axis}[
  grid=major,
  ymin=0, ymax=1.2, xmin=1, xmax=100, xmode=log,
  ytick align=outside, ytick pos=left,
  xtick align=outside, xtick pos=left,
  xlabel=Total $\epsilon$,
  ylabel={F1 Score},
  legend pos=north west,
  legend style={draw=none}]
\addplot+[
  red, mark options={scale=0.75},
  smooth, 
  error bars/.cd, 
    y fixed,
    y dir=both, 
    y explicit
] table [x=x, y=y,y error=error, col sep=comma] {data/asia/privpc.txt};
\addplot+[
  blue, mark=square*, mark options={scale=0.75},
  smooth,
  dashed,
  error bars/.cd, 
    y fixed,
    y dir=both, 
    y explicit
] table [x=x, y=y,y error=error, col sep=comma] {data/asia/empc.txt};
\addplot+[
  orange, mark=square*, mark options={scale=0.75},
  smooth,
  densely dashdotted,
  error bars/.cd, 
    y fixed,
    y dir=both, 
    y explicit
] table [x=x, y=y,y error=error, col sep=comma] {data/asia/svt.txt};
\addplot+[
  gray, mark=square*, mark options={scale=0.75},
  smooth,
  densely dotted,
  error bars/.cd, 
    y fixed,
    y dir=both, 
    y explicit
] table [x=x, y=y,y error=error, col sep=comma] {data/asia/sub_gaussian.txt};
\end{axis}
\end{tikzpicture}}
         \caption{Asia.}
         \label{fig:asia}
     \end{subfigure}
     \hfill
     \begin{subfigure}[b]{\figwidth}
         \centering
         \resizebox{\textwidth}{!}{\begin{tikzpicture}
\begin{axis}[
  grid=major,
  ymin=0, ymax=1.2, xmin=1, xmax=100, xmode=log,
  ytick align=outside, ytick pos=left,
  xtick align=outside, xtick pos=left,
  xlabel=Total $\epsilon$,
  ylabel={F1 Score},
  legend pos=north west,
  legend style={draw=none}]
\addplot+[
  red, mark options={scale=0.75},
  smooth, 
  error bars/.cd, 
    y fixed,
    y dir=both, 
    y explicit
] table [x=x, y=y,y error=error, col sep=comma] {data/survey/privpc.txt};
\addplot+[
  blue, mark=square*, mark options={scale=0.75},
  smooth,
  dashed,
  error bars/.cd, 
    y fixed,
    y dir=both, 
    y explicit
] table [x=x, y=y,y error=error, col sep=comma] {data/survey/empc.txt};
\addplot+[
  orange, mark=square*, mark options={scale=0.75},
  smooth,
  densely dashdotted,
  error bars/.cd, 
    y fixed,
    y dir=both, 
    y explicit
] table [x=x, y=y,y error=error, col sep=comma] {data/survey/svt.txt};
\addplot+[
  gray, mark=square*, mark options={scale=0.75},
  smooth,
  densely dotted,
  error bars/.cd, 
    y fixed,
    y dir=both, 
    y explicit
] table [x=x, y=y,y error=error, col sep=comma] {data/survey/sub_gaussian.txt};
\end{axis}
\end{tikzpicture}}
         \caption{Survey.}
         \label{fig:survey}
     \end{subfigure}
     \hfill
        \caption{$F1$-Score vs. Privacy Budget.}
        \label{fig:f1}
\vspace{-10pt}
\end{figure}
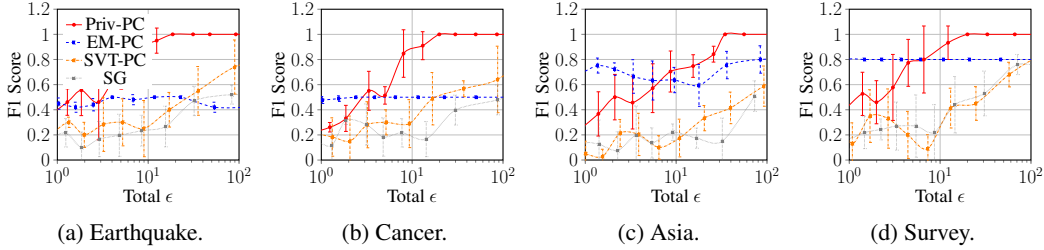

In the evaluation, \alg achieves better utility than \empc when the privacy budget is reasonably large as shown in Figure~\ref{fig:f1}. 
\alg always converges to perfect accuracy when privacy cost grows while \empc does not.
The reason is that \alg converges to \pc when privacy cost grows but \empc does not because it contains a unique sub-routine to explicitly decide the number of edges to delete.
The sub-routine intrinsically inhibits the accuracy of \empc.
On the other hand, \empc achieves better utility under small privacy budget\footnote{The line for \empc is almost flat in Figure~\ref{fig:f1} because the rising segment appears under small privacy budget out of the axis scope (approximately $0.01\sim0.1$ according to our evaluation).} because the exponential mechanism has better utility than the sparse vector technique under small privacy budget as pointed out in~\cite{lyu2016understanding}.

Compared with \svtpc, \alg always achieves much better utility (Figure~\ref{fig:f1}).
The huge improvement should be attributed to \sieve because it effectively suppresses type I and type II errors in sparse vector technique~\ref{sec:probe}.
Second, because the sensitivity of Kendall's $\tau$ is inversely proportional to the size of the input dataset, the noise is typically small when the dataset is large.
Thus, the noise does not severely harm the utility while preserving rigorous privacy.

For completeness, we also evaluated sub-sampled gaussian (\texttt{SG}) with moments accountant~\cite{abadi2016deep} as a representative for tight composition theorem~\cite{wang2018subsampled}.
The results show that the utility is typically worse than \empc and \svtpc, not to mention \alg.
%

\vspace{-5pt}
\subsection{Running Time}
\vspace{-5pt}

\begin{table}[htbp]
    \caption{Running time when privacy budget for each \sieve is $1$.}
    \label{tab:time}
    \centering
    \begin{tabular}{|c|c|c|c|c|c|c|c|c|}
    \hline
    Average Running Time & EM-PC & SVT & Priv-PC & Priv-PC w/o sub-sampling\\
    \hline
    Earthquake~\cite{korb2010bayesian} & 176.04s & 3.38s & 6.62s & 11.01s\\
    \hline
    Cancer~\cite{korb2010bayesian} & 64.62s & 2.94s & 6.09s & 10.83s\\
    \hline
    Asia~\cite{lauritzen1988local} & 531.80s & 10.06s & 16.19s & 19.40s\\
    \hline
    Survey~\cite{scutari2014bayesian} & 68.13s & 1.21s & 2.13s & 5.12s \\
    \hline
    \end{tabular}
\end{table}

\alg achieves 10.61 to 32.85 times speedup compared with \empc as shown in Table~\ref{tab:time}.
The improvement is due to two reasons.
First, \alg can deal with online queries while \empc cannot.
Thus, if an edge is removed due to a previous independence test, later tests on the same edge can be skipped to avoid extra computation overhead.
Second, in the \texttt{sieve} process, \alg only runs independence tests on a subset of the dataset which further accelerates the process.
This also explains why \alg sometimes runs faster than \svtpc.

To better understand how the two factors contribute to the speedup, we run \alg without sub-sampling under the same setting and include the results in Table~\ref{tab:time}.
The results show that the first factor provides 5.97 to 27.41 times speedup and sub-sampling provides 1.20 to 2.40 times speedup.

\vspace{-10pt}
\section{Related Work}
\vspace{-5pt}

Causal inference has a long history and there are several excellent overviews \cite{pearl2009causal, glymour2019review} of this area. In this section, we briefly introduce the related works in the two most relevant sub-areas: causal discovery based on graph models and private causal inference.

Causal discovery based on graph models can be roughly classified into two categories. The first category is constraint-based causal discovery. The \pc algorithm~\cite{spirtes2000causation} is the most well-known algorithm in this category. It traverses all the edges and adjacent conditional sets in the causal graph and removes the edge if the conditional independence test indicates that the edge connects two independent variables. An important variation of the \pc algorithm is the Fast Causal Inference (FCI)~\cite{spirtes2000causation}, which tolerates latent confounders. The Greedy Equivalence Search (GES)~\cite{chickering2002optimal} is another widely-used algorithm in this category which starts with an empty graph and gradually adds edges. The second category is based on functional causal models (FCM). A FCM represents the effect Y as a function of the direct causes X and some noise: $Y=f(X, \epsilon; \theta)$. In Linear, Non-Gaussian and Acyclic Model (LiNGAM)~\cite{shimizu2011directlingam}, $f$ is linear, and only one of $\epsilon$ and $X$ can be Gaussian. In Post-Nonlinear Model (PNL)~\cite{zhang2006extensions, zhang2009ica}, $Y=f_2(f_1(X)+\epsilon)$. Additive Noise Model (ANM)~\cite{hoyer2009nonlinear} further constrains the post-nonlinear transformation of PNL model.

Private causal discovery has a relatively short history. 
In 2013, Johnson \emph{et al.}~\cite{johnson2013privacy} studied differentially private Genome-Wide
Association Studies (GWAS). 
They used Laplace mechanism and exponential mechanism to build specific queries of interest in GWAS.
In 2015, Kusner \emph{et al.}~\cite{kusner2015private} analyzed the sensitivity of several commonly used dependence scores on training and testing datasets, and then applied Laplace mechanism to the ANM model. 
Xu \emph{et al.}~\cite{xu2017differential} proposed to apply exponential mechanism to the \pc algorithm. 
Another line of work focuses on private bayesian inference including \cite{dimitrakakis2014robust, heikkila2017differentially, bernstein2018differentially}.
Their pioneering works are inspiring but lack novelty from differential privacy side because they all directly leverage off-the-shelf differentially private mechanisms without any modification.
\vspace{-5pt}
\section{Conclusion \& Future Work}
\vspace{-5pt}

This paper takes an important step towards practical differentially private causal discovery. 
We presented \alg, a novel differentially private causal discovery algorithm with high accuracy and short running time.
We also performed an empirical study to demonstrate the advantages compared with the state-of-the-art.
%

At the same time, we observe many challenges in differentially private causal discovery that existing techniques are
not capable of handling. 
For example, it is unclear how to reconcile independence tests with infinite sensitivity such as G-test and $\chi^2$-test; it is unclear how to handle data type beyond categorical data like numerical data. 
We consider all these problems as important future work in the research agenda toward solving the private causal discovery problem.
\section*{Potential Broader Impact}

\alg provides an approach to effectively discovering causal graphs from purely observational data. 
It can be deployed in genomics, ecology, epidemiology, space physics, clinical medicine, and neuroscience to release causal graph while preserving the privacy of the sensitive input data.
At the same time, \alg should be used with carefully calibrated parameters to make sure the discovered graph is accurate and preserves the privacy .
Inappropriate parameters might lead to weak privacy guarantee that does not provide strong protection against attacks such as inference attack.

\bibliographystyle{plain}
\bibliography{ref.bib}

\begin{thebibliography}{10}

\bibitem{abadi2016deep}
Martin Abadi, Andy Chu, Ian Goodfellow, H~Brendan McMahan, Ilya Mironov, Kunal
  Talwar, and Li~Zhang.
\newblock Deep learning with differential privacy.
\newblock In {\em Proceedings of the 2016 ACM SIGSAC Conference on Computer and
  Communications Security}, pages 308--318, 2016.

\bibitem{balle2018privacy}
Borja Balle, Gilles Barthe, and Marco Gaboardi.
\newblock Privacy amplification by subsampling: Tight analyses via couplings
  and divergences.
\newblock In {\em Advances in Neural Information Processing Systems}, pages
  6277--6287, 2018.

\bibitem{bernstein2018differentially}
Garrett Bernstein and Daniel~R Sheldon.
\newblock Differentially private bayesian inference for exponential families.
\newblock In {\em Advances in Neural Information Processing Systems}, pages
  2919--2929, 2018.

\bibitem{chickering2002optimal}
David~Maxwell Chickering.
\newblock Optimal structure identification with greedy search.
\newblock {\em Journal of machine learning research}, 3(Nov):507--554, 2002.

\bibitem{dimitrakakis2014robust}
Christos Dimitrakakis, Blaine Nelson, Aikaterini Mitrokotsa, and Benjamin~IP
  Rubinstein.
\newblock Robust and private bayesian inference.
\newblock In {\em International Conference on Algorithmic Learning Theory},
  pages 291--305. Springer, 2014.

\bibitem{dwork2006calibrating}
Cynthia Dwork, Frank McSherry, Kobbi Nissim, and Adam Smith.
\newblock Calibrating noise to sensitivity in private data analysis.
\newblock In {\em Theory of cryptography conference}, pages 265--284. Springer,
  2006.

\bibitem{dwork2014algorithmic}
Cynthia Dwork, Aaron Roth, et~al.
\newblock The algorithmic foundations of differential privacy.
\newblock {\em Foundations and Trends{\textregistered} in Theoretical Computer
  Science}, 9(3--4):211--407, 2014.

\bibitem{glymour2019review}
Clark Glymour, Kun Zhang, and Peter Spirtes.
\newblock Review of causal discovery methods based on graphical models.
\newblock {\em Frontiers in Genetics}, 10, 2019.

\bibitem{heikkila2017differentially}
Mikko Heikkil{\"a}, Eemil Lagerspetz, Samuel Kaski, Kana Shimizu, Sasu Tarkoma,
  and Antti Honkela.
\newblock Differentially private bayesian learning on distributed data.
\newblock In {\em Advances in neural information processing systems}, pages
  3226--3235, 2017.

\bibitem{hoyer2009nonlinear}
Patrik~O Hoyer, Dominik Janzing, Joris~M Mooij, Jonas Peters, and Bernhard
  Sch{\"o}lkopf.
\newblock Nonlinear causal discovery with additive noise models.
\newblock In {\em Advances in neural information processing systems}, pages
  689--696, 2009.

\bibitem{johnson2013privacy}
Aaron Johnson and Vitaly Shmatikov.
\newblock Privacy-preserving data exploration in genome-wide association
  studies.
\newblock In {\em Proceedings of the 19th ACM SIGKDD international conference
  on Knowledge discovery and data mining}, pages 1079--1087, 2013.

\bibitem{kendall1938new}
Maurice~G Kendall.
\newblock A new measure of rank correlation.
\newblock {\em Biometrika}, 30(1/2):81--93, 1938.

\bibitem{korb2010bayesian}
Kevin~B Korb and Ann~E Nicholson.
\newblock {\em Bayesian artificial intelligence}.
\newblock CRC press, 2010.

\bibitem{korn1984kendall}
Edward~L Korn.
\newblock Kendall's tau with a blocking variable.
\newblock {\em Biometrics}, pages 209--214, 1984.

\bibitem{kusner2015private}
Matt~J Kusner, Yu~Sun, Karthik Sridharan, and Kilian~Q Weinberger.
\newblock Private causal inference.
\newblock {\em arXiv preprint arXiv:1512.05469}, 2015.

\bibitem{lauritzen1988local}
Steffen~L Lauritzen and David~J Spiegelhalter.
\newblock Local computations with probabilities on graphical structures and
  their application to expert systems.
\newblock {\em Journal of the Royal Statistical Society: Series B
  (Methodological)}, 50(2):157--194, 1988.

\bibitem{lyu2016understanding}
Min Lyu, Dong Su, and Ninghui Li.
\newblock Understanding the sparse vector technique for differential privacy.
\newblock {\em arXiv preprint arXiv:1603.01699}, 2016.

\bibitem{mcdonald2009handbook}
John~H McDonald.
\newblock {\em Handbook of biological statistics}, volume~2.
\newblock sparky house publishing Baltimore, MD, 2009.

\bibitem{mchugh2013chi}
Mary~L McHugh.
\newblock The chi-square test of independence.
\newblock {\em Biochemia medica: Biochemia medica}, 23(2):143--149, 2013.

\bibitem{mironov2017renyi}
Ilya Mironov.
\newblock R{\'e}nyi differential privacy.
\newblock In {\em 2017 IEEE 30th Computer Security Foundations Symposium
  (CSF)}, pages 263--275. IEEE, 2017.

\bibitem{nissim2007smooth}
Kobbi Nissim, Sofya Raskhodnikova, and Adam Smith.
\newblock Smooth sensitivity and sampling in private data analysis.
\newblock In {\em Proceedings of the thirty-ninth annual ACM symposium on
  Theory of computing}, pages 75--84, 2007.

\bibitem{nocedal2006numerical}
Jorge Nocedal and Stephen Wright.
\newblock {\em Numerical optimization}.
\newblock Springer Science \& Business Media, 2006.

\bibitem{pearl2009causal}
Judea Pearl et~al.
\newblock Causal inference in statistics: An overview.
\newblock {\em Statistics surveys}, 3:96--146, 2009.

\bibitem{scutari2014bayesian}
Marco Scutari and Jean-Baptiste Denis.
\newblock {\em Bayesian networks: with examples in R}.
\newblock CRC press, 2014.

\bibitem{shimizu2011directlingam}
Shohei Shimizu, Takanori Inazumi, Yasuhiro Sogawa, Aapo Hyv{\"a}rinen,
  Yoshinobu Kawahara, Takashi Washio, Patrik~O Hoyer, and Kenneth Bollen.
\newblock Directlingam: A direct method for learning a linear non-gaussian
  structural equation model.
\newblock {\em Journal of Machine Learning Research}, 12(Apr):1225--1248, 2011.

\bibitem{spearman1961proof}
Charles Spearman.
\newblock The proof and measurement of association between two things.
\newblock 1961.

\bibitem{spirtes2000causation}
Peter Spirtes, Clark~N Glymour, Richard Scheines, and David Heckerman.
\newblock {\em Causation, prediction, and search}.
\newblock MIT press, 2000.

\bibitem{taylor1987kendall}
Jeremy~MG Taylor.
\newblock Kendall's and spearman's correlation coefficients in the presence of
  a blocking variable.
\newblock {\em Biometrics}, pages 409--416, 1987.

\bibitem{wang2018subsampled}
Yu-Xiang Wang, Borja Balle, and Shiva Kasiviswanathan.
\newblock Subsampled r$\backslash$'enyi differential privacy and analytical
  moments accountant.
\newblock {\em arXiv preprint arXiv:1808.00087}, 2018.

\bibitem{xu2017differential}
Depeng Xu, Shuhan Yuan, and Xintao Wu.
\newblock Differential privacy preserving causal graph discovery.
\newblock In {\em 2017 IEEE Symposium on Privacy-Aware Computing (PAC)}, pages
  60--71. IEEE, 2017.

\bibitem{zhang2006extensions}
Kun Zhang and Lai-Wan Chan.
\newblock Extensions of ica for causality discovery in the hong kong stock
  market.
\newblock In {\em International Conference on Neural Information Processing},
  pages 400--409. Springer, 2006.

\bibitem{zhang2009ica}
Kun Zhang, Heng Peng, Laiwan Chan, and Aapo Hyv{\"a}rinen.
\newblock Ica with sparse connections: Revisited.
\newblock In {\em International Conference on Independent Component Analysis
  and Signal Separation}, pages 195--202. Springer, 2009.

\end{thebibliography}

\appendix
\section{Pseudo-code of PC Algorithm}
\label{sec:pc_code}

For completeness, we provide the pseudo-code for the \pc algorithm in Algorithm~\ref{alg:pc_app}.

\begin{algorithm}
\DontPrintSemicolon
\SetKwInOut{Input}{Input}
\SetKwInOut{Output}{Output}
\SetKwProg{Fn}{Function}{:}{\KwRet}
\SetAlgoLined
\KwIn{$V$: vertex set, $\mathcal{D}$: dataset, $T$: threshold}
\Fn{PC($V, \mathcal{D},T$)}{
$\mathcal{G}$ = complete graph on V, $ord$ = 0\;
\While{$\exists~v_i~s.t.~ |Adj(\mathcal{G}, v_i)- v_j|\geq \text{ord}$}{
    \While{$\exists$ edge $(v_i, v_j)~s.t.~|Adj(\mathcal{G}, v_i)-v_j| \geq ord$ that has not been tested}{
        select edge $(v_i, v_j)$ in $\mathcal{G}~s.t.~|Adj(\mathcal{G}, v_i)-v_j| \geq ord$\;
        \While{$\exists~S$ that has not been tested}{
            choose $S \subseteq Adj(\mathcal{G}, v_i)-v_j$, $|S|=ord$\;
            \If{$indep\_test(ij|S) \geq T$}{
                remove $(v_i, v_j)$ from $\mathcal{G}$\;
                break\;
            }
        }    
    }
    $ord$ = $ord$ + 1\;}
    Output $\mathcal{G}$\;
}
\caption{\pc Algorithm.}
\label{alg:pc_app}
\end{algorithm}
\section{Exponential Mechanism \& Sparse Vector Technique}
\label{sec:em_svt}

For completeness, we provide detailed description of exponential mechanism and exponential mechanism.

\paragraph{Exponential Mechanism.}
Exponential mechanism is designed for differentially private selection from infinite output set.
It computes an utility score for each candidate output and randomly selects from the output candidates based on probability derived from the utility score.
The pseudo-code for exponential mechanism is shown in Algorithm~\ref{alg:em}.

\begin{algorithm}
\DontPrintSemicolon
\SetKwInOut{Input}{Input}
\SetKwInOut{Output}{Output}
\SetKwProg{Fn}{Function}{:}{\KwRet}
\SetAlgoLined
\KwIn{$\mathcal{D}$: dataset, $O$: output set, $u$: 1-sensitive utility function, $\epsilon$: privacy parameters.}
\Fn{EM($\mathcal{D},O,u,\epsilon$)}{
    Initiate $U$ as an empty lists\;
    \For{$o \in O$}{
        Append $u(\mathcal{D}, o)$ to $U$\;
    }
    Randomly select $o$ from $O$ according to probability $\frac{\exp(\epsilon u_o/2)}{\sum_{u_i\in U}\exp(\epsilon u_i/2)}$.
}
\caption{Exponential Mechanism}
\label{alg:em}
\end{algorithm}

\paragraph{Sparse Vector Technique.} Sparse vector technique is a widely used differentially private mechanism. It can answer a large number of queries while only paying privacy cost for a small portion of them. The pseudo-code for sparse vector technique is shown in Algorithm~\ref{alg:svt}.

\begin{algorithm}
\DontPrintSemicolon
\SetKwInOut{Input}{Input}
\SetKwInOut{Output}{Output}
\SetKwProg{Fn}{Function}{:}{\KwRet}
\SetAlgoLined
\KwIn{$D$: dataset, $\{f_i\}$: 1-sensitive queries, $T$: threshold, \\~~~~~~~~~~~~$c$: quota of above-threshold queries, $(\epsilon,\delta)$: privacy parameters.}
\Fn{SVT($\mathcal{D},\{f_i\},T,c,\epsilon,\delta$)}{
    \leIf{$\delta=0$}
    {Let $\sigma=\frac{2c}{\epsilon}$}
    {Let $\sigma=\frac{\sqrt{32c\log{\frac{1}{\delta}}}}{\epsilon}$ }
    Let count = 0, $\hat{T}_{count}=T+Lap(\sigma)$\;
    \For{Each query $i$}{
        Let $\nu=Lap(2\sigma)$\;
        \If{$f_i(\mathcal{D})+\nu_i\geq\hat{T}_{count}$}{
            Output $a_i=\top$\;
            Let count = count + 1
            Let $\hat{T}_{count} = T+Lap(\sigma)$\;
        }
        \lElse{
            Output $a_i=\bot$
        }
        \lIf{count $\geq q$}{
            Halt
        }
    }
}
\caption{Sparse Vector Technique.}
\label{alg:svt}
\end{algorithm}
\section{Proof for Error Bound}
\label{sec:error}

\begin{thm}[Type I error bound]

Let $E_1^\alpha$ denotes the event that Algorithm~\ref{alg:onesieve} filters out $f(D)\geq T+\alpha$.
$$\mathbb{P}[E_1^\alpha] \leq \exp(-\frac{\epsilon'(\alpha+t)}{6\Delta}) - \frac{1}{4}\exp(-\frac{\epsilon'(\alpha+t)}{3\Delta})$$.
\label{thm:appendix_error}

\end{thm}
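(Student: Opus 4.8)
The plan is to reduce the event $E_1^\alpha$ to a statement about the two independent Laplace noises drawn in the \texttt{sieve} step of Algorithm~\ref{alg:onesieve}, and then to bound the tail of their difference by a union bound that is balanced between the two noise scales and sharpened by inclusion--exclusion. Write $\mu \sim \mathrm{Lap}(\tfrac{2\Delta}{\epsilon'})$ for the noise added to the threshold (so $\hat T = T - t + \mu$) and $\nu \sim \mathrm{Lap}(\tfrac{4\Delta}{\epsilon'})$ for the noise added to the query. The query survives the sieve iff $f(\mathcal{D}') + \nu \ge T - t + \mu$. Taking the analyzed value on the subsample to satisfy $f(\mathcal{D}') \ge T + \alpha$, the query can be filtered out only when $(T+\alpha) + \nu < T - t + \mu$, i.e.\ only when $\mu - \nu > \alpha + t$. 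Hence $\mathbb{P}[E_1^\alpha] \le \mathbb{P}[\mu - \nu > \alpha + t]$, and it remains to control this tail.

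Next I would split the required deviation $\alpha + t$ across the two noises in the ratio of their scales ($1:2$). Set $A = \{\mu > \tfrac{1}{3}(\alpha+t)\}$ and $B = \{-\nu > \tfrac{2}{3}(\alpha+t)\}$; if neither occurs then $\mu - \nu \le \tfrac{1}{3}(\alpha+t) + \tfrac{2}{3}(\alpha+t) = \alpha+t$, so $\{\mu - \nu > \alpha+t\} \subseteq A \cup B$. Using the one-sided Laplace tail $\mathbb{P}[\mathrm{Lap}(b) > s] = \tfrac{1}{2} e^{-s/b}$ for $s \ge 0$, with $b = \tfrac{2\Delta}{\epsilon'}$ for $\mu$ and $b = \tfrac{4\Delta}{\epsilon'}$ for $-\nu$ (which is distributed as $\nu$), one gets $\mathbb{P}[A] = \mathbb{P}[B] = \tfrac{1}{2}\exp\bigl(-\tfrac{\epsilon'(\alpha+t)}{6\Delta}\bigr)$; the $1:2$ split is chosen exactly so that these two estimates match.

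Finally, since $\mu$ and $\nu$ are independent, $A$ and $B$ are independent, and inclusion--exclusion gives
\[
\mathbb{P}[A\cup B] = \mathbb{P}[A] + \mathbb{P}[B] - \mathbb{P}[A]\,\mathbb{P}[B] = \exp\!\Bigl(-\frac{\epsilon'(\alpha+t)}{6\Delta}\Bigr) - \frac{1}{4}\exp\!\Bigl(-\frac{\epsilon'(\alpha+t)}{3\Delta}\Bigr),
\]
so chaining $\mathbb{P}[E_1^\alpha] \le \mathbb{P}[\mu-\nu>\alpha+t] \le \mathbb{P}[A\cup B]$ yields the claim.

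The Laplace tail estimates are routine; the two points that need care are (i) the reduction $E_1^\alpha \Rightarrow \{\mu - \nu > \alpha+t\}$, where one must track the sign of the downward tweak $-t$ and use the mild fact that subsampling does not move the query value below $T+\alpha$, and (ii) retaining the cross term $\mathbb{P}[A]\,\mathbb{P}[B]$ rather than discarding it --- a plain union bound produces only the first exponential and misses the negative second term, which is precisely the content of the stated bound. Point (ii) is the step I would be most careful about.
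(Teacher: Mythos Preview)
Your proposal is correct and is essentially the same argument as the paper's: both reduce $E_1^\alpha$ to the event that the combined Laplace noises exceed $\alpha+t$, split that deviation in the $1{:}2$ ratio of the two noise scales, and exploit independence. The only cosmetic difference is that the paper lower-bounds $\mathbb{P}[\neg E_1^\alpha]$ by $\mathbb{P}[\neg A]\,\mathbb{P}[\neg B]=(1-\tfrac12 e^{-\epsilon'(\alpha+t)/(6\Delta)})^2$ and complements, whereas you compute $\mathbb{P}[A\cup B]$ directly by inclusion--exclusion; these are the same identity $1-(1-p)(1-q)=p+q-pq$.
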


\begin{proof}

We want to upper bound the probability of $E_1^\alpha$.
Equally, we lower bound the probability of $\neg E_1^\alpha$ by
the probability that the noise on the threshold is smaller than $\frac{1}{3}(t+\alpha)$ and the noise on the query output is smaller than $\frac{2}{3}(t+\alpha)$.
Because for Laplace noise, $\mathbb{P}[x\geq w]=\exp(-w/b)$,
we have
$$
\mathbb{P}[\neg E_1^\alpha] \geq (1-\frac{1}{2}\exp(-\frac{\epsilon'(\alpha+t)}{6\Delta}))^2 = 1 - \exp(-\frac{\epsilon'(\alpha+t)}{6\Delta}) + \frac{1}{4}\exp(-\frac{\epsilon'(\alpha+t)}{3\Delta})
$$.
Thus,
$$
\mathbb{P}[E_1^\alpha] \leq 1-\mathbb{P}[\neg E_1^\alpha] \leq \exp(-\frac{\epsilon'(\alpha+t)}{6\Delta}) - \frac{1}{4}\exp(-\frac{\epsilon'(\alpha+t)}{3\Delta})
$$

\end{proof}

\begin{thm}[Type II error bound]

Let $E_2^\alpha$ denotes the event that Algorithm~\ref{alg:onesieve} fails to filter out $f(D)\leq T-\alpha$. 
If $\alpha \geq t$, then
$$\mathbb{P}[E_2^\alpha]\leq \exp(-\frac{12\epsilon\alpha+\epsilon'(\alpha-t)}{6\Delta}) - \frac{1}{4}\exp(-\frac{6\epsilon\alpha+\epsilon'(\alpha-t)}{3\Delta})$$.

\end{thm}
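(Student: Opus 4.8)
The plan is to reuse the Type~I template, except that now \emph{two} stages of Algorithm~\ref{alg:onesieve} must misfire before a query with $f(\mathcal{D})\le T-\alpha$ is wrongly let through. Fix such a query $f$. For $E_2^\alpha$ to occur two events are necessary: its sieve comparison fires, $f(\mathcal{D}')+\nu_s\ge \hat T$ with $\nu_s\sim Lap(\tfrac{4\Delta}{\epsilon'})$, $\hat T=T-t+\nu_T$, $\nu_T\sim Lap(\tfrac{2\Delta}{\epsilon'})$; and it then clears the examine check, $f(\mathcal{D})+\nu_e\ge T$ with $\nu_e\sim Lap(\tfrac{2\Delta}{\epsilon})$. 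As in the Type~I proof I identify $f(\mathcal{D}')$ with $f(\mathcal{D})$ --- the one place the argument idealizes the sub-sampling --- and take the worst case $f(\mathcal{D})=T-\alpha$. Then the first event forces $\nu_s-\nu_T\ge \alpha-t$ and the second forces $\nu_e\ge\alpha$; since $\nu_s,\nu_T,\nu_e$ are independent fresh draws,
\[
\mathbb{P}[E_2^\alpha]\ \le\ \mathbb{P}[\nu_s-\nu_T\ge \alpha-t]\cdot \mathbb{P}[\nu_e\ge \alpha].
\]

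I then bound the two factors separately. The examine factor is a single Laplace right tail, $\mathbb{P}[\nu_e\ge\alpha]=\tfrac12 e^{-\alpha\epsilon/(2\Delta)}$, which only needs $\alpha\ge 0$. For the sieve factor I pass to the complement and split the deviation: the event $\{\nu_s-\nu_T<\alpha-t\}$ is implied by the independent conjunction $\{\nu_s<\tfrac23(\alpha-t)\}\cap\{\nu_T>-\tfrac13(\alpha-t)\}$, so
\[
\mathbb{P}[\nu_s-\nu_T\ge\alpha-t]\ \le\ 1-\Big(1-\tfrac12 e^{-\epsilon'(\alpha-t)/(6\Delta)}\Big)^{2}\ =\ e^{-\epsilon'(\alpha-t)/(6\Delta)}-\tfrac14 e^{-\epsilon'(\alpha-t)/(3\Delta)},
\]
which is exactly the Type~I expression with $\alpha+t$ replaced by $\alpha-t$. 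The weights $\tfrac23$ and $\tfrac13$ are taken in the same $2{:}1$ ratio as the two noise scales $\tfrac{4\Delta}{\epsilon'}$ and $\tfrac{2\Delta}{\epsilon'}$, which is what makes the two one-sided tails coincide; and the split is legitimate only because the hypothesis $\alpha\ge t$ makes the slack $\alpha-t$ nonnegative, so that each of those terms really is a Laplace tail. Multiplying the two factors and collecting the exponents yields the stated bound.

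The step I expect to be the crux is this middle one: controlling the tail of a difference of two Laplace variables of \emph{unequal} scale. One has to choose the budget split so that the two resulting tails balance --- fixing both the exponent $\epsilon'(\alpha-t)/(6\Delta)$ and the constant $\tfrac14$ --- and one has to keep the hypothesis $\alpha\ge t$ visible throughout, since it is precisely what keeps the sieve slack nonnegative and hence the monotone Laplace tail bounds valid. The rest is routine bookkeeping: reading the two threshold comparisons off Algorithm~\ref{alg:onesieve}, using independence of the fresh noise draws, and applying $\mathbb{P}[X\ge w]=\tfrac12 e^{-w/b}$ for $X\sim Lap(b)$, $w\ge 0$, term by term.
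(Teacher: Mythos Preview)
Your approach is essentially identical to the paper's: factor $E_2^\alpha$ into ``missed by the SVT sieve'' and ``missed by the Laplace examine'', use independence to multiply, and bound the sieve factor by rerunning the Type~I argument with $\alpha+t$ replaced by $\alpha-t$ (this is also where the paper uses the hypothesis $\alpha\ge t$). The one discrepancy is in the examine factor: you correctly compute the Laplace tail $\mathbb{P}[\nu_e\ge\alpha]=\tfrac12 e^{-\epsilon\alpha/(2\Delta)}$ for $\nu_e\sim Lap(2\Delta/\epsilon)$, whereas the paper writes $\mathbb{P}[E_{lm}^\alpha]=e^{-2\epsilon\alpha/\Delta}$, and it is the latter that produces the $12\epsilon\alpha$ and $6\epsilon\alpha$ in the stated exponents. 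So your last sentence --- that multiplying ``yields the stated bound'' --- will not literally check out with your (correct) tail; you will get exponents $3\epsilon\alpha$ in place of $12\epsilon\alpha$ and an extra prefactor $\tfrac12$. This is a numerical inconsistency in the paper's stated constant rather than a flaw in your argument.
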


\begin{proof}
If $f(D)$ is not filtered out, it needs to be missed by both sparse vector technique and the Laplace mechanism.
The probability bound for being missed by the sparse vector technique is
$$
\mathbb{P}[E_{svt}^\alpha] \leq  \exp(-\frac{\epsilon'(\alpha-t)}{6\Delta}) - \frac{1}{4}\exp(-\frac{\epsilon'(\alpha-t)}{3\Delta})
$$
following similar proof path to theorem~\ref{thm:appendix_error}.
The probability being missed by the Laplace mechanism is bounded by
$$
\mathbb{P}[E_{lm}^\alpha] = \exp(-\frac{2\epsilon\alpha}{\Delta})
$$.
Thus, 
\begin{equation*}
\begin{split}
\mathbb{P}[E_2^\alpha] = \mathbb{P}[E_{svt}^\alpha]\cdot \mathbb{P}[E_{lm}^\alpha] &\leq  (\exp(-\frac{\epsilon'(\alpha-t)}{6\Delta}) - \frac{1}{4}\exp(-\frac{\epsilon'(\alpha-t)}{3\Delta}))\cdot\exp(-\frac{2\epsilon\alpha}{\Delta})\\
&= \exp(-\frac{12\epsilon\alpha+\epsilon'(\alpha-t)}{6\Delta}) - \frac{1}{4}\exp(-\frac{6\epsilon\alpha+\epsilon'(\alpha-t)}{3\Delta})
\end{split}
\end{equation*}
\end{proof}
\section{Sensitivity of Kendall's $\tau$}
\label{sec:tau}

In this section, we derive the sensitivity of Kendall's $\tau$ and its conditional version.
We first give the complete definition of Kendall's $\tau$ and its conditional version.

\begin{definition}[Kendall's $\tau$]
Let $\{(a_1, b_1), \cdots, (a_n, b_n)\}$ denotes the observations. 
A pair of observation indices $(i, j)$ are called \emph{concordant} if $a_i>a_j$ and $b_i>b_j$. 
Otherwise $(i, j)$ is called \emph{discordant}.
Kendall's $\tau$ is defined as
$$
\tau_{ij} := \frac{2|C-D|}{n(n-1)}
$$
where $C$ is the number of concordant pairs and $D$ is the number of discordant pairs.
\label{def:tau}
\end{definition}

Kusner et al.~\cite{kusner2015private} derive the sensitivity for unconditional Kendall's $\tau$ when the neighboring relation between datasets are constrained to replacement. 
The first step towards complete sensitivity analysis for unconditional Kendall's $\tau$ is to extend the neighboring relation to increment.

\begin{thm}
Kendall's $\tau$ is $\frac{2}{n-1}$-sensitive.
\end{thm}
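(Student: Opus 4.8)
The plan is to compare $\tau$ on two datasets that differ by a single observation under the increment relation, say $\mathcal{D}$ of size $n$ and $\mathcal{D}' = \mathcal{D}\setminus\{p\}$ of size $n-1$, and to show directly that $|\tau(\mathcal{D})-\tau(\mathcal{D}')|\le \frac{2}{n-1}$. The starting point is an exact accounting of how the concordant and discordant counts change: deleting $p$ removes precisely the $n-1$ pairs involving $p$, each of which was concordant or discordant, while every pair not involving $p$ is untouched. So if $S=C-D$ is the signed concordance count on $\mathcal{D}$ (with $|S|\le\binom{n}{2}$ and $\tau(\mathcal{D})=\frac{2|S|}{n(n-1)}$) and $\delta$ is the signed contribution of the $n-1$ deleted pairs, so $|\delta|\le n-1$, then the count on $\mathcal{D}'$ is $S-\delta$ and $\tau(\mathcal{D}')=\frac{2|S-\delta|}{(n-1)(n-2)}$.

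Next I would reduce the claim to a scalar inequality. Using $\big||x|-|y|\big|\le|x-y|$ and pulling out the common factor $\tfrac{2}{n-1}$, it suffices to bound $\big|\frac{|S|}{n}-\frac{|S-\delta|}{n-2}\big|$. I would handle the two directions separately: on the upper side use $|S-\delta|\le|S|+|\delta|\le|S|+(n-1)$, and on the lower side use $|S-\delta|\ge|S|-(n-1)$ together with the a priori bound $|S|\le\binom{n}{2}$. The a priori bound is needed because $\frac1n$ and $\frac1{n-2}$ differ by $\Theta(1/n^2)$, so the term $|S|\big(\frac1{n-2}-\frac1n\big)$ is not negligible and must be controlled using $|S|=O(n^2)$; collecting the extremal choices of $|S|$ and $|\delta|$ then yields the constant.

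The step I expect to be the main obstacle is this last bookkeeping that produces the \emph{exact} constant. Two effects work against each other: the change in the numerator $C-D$, which is bounded by the number of pairs touched (at most $n-1$), and the change in the combinatorial normalization from $\binom{n-1}{2}$ to $\binom{n}{2}$; these do not cancel, and a careless triangle-inequality split risks losing a constant factor. To locate the true worst case and make sure the inequality chain is tuned to deliver the stated $\frac{2}{n-1}$, I would test the extremal configuration in which $\mathcal{D}'$ is perfectly monotone (all pairs concordant, $\tau(\mathcal{D}')=1$) and the removed point $p$ was discordant with every point of $\mathcal{D}'$; that configuration maximizes the swing in $\tau$ and pins down the constant the argument must match. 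Finally, running the same argument with the roles of $\mathcal{D}$ and $\mathcal{D}'$ exchanged covers the ``add a point'' direction, so the bound holds for the full increment relation; this unconditional estimate is also the building block I would later feed into the stratified analysis behind Theorem~\ref{thm:stau}.
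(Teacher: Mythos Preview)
Your two–sided accounting is more careful than what the paper actually does. The paper's argument for the increment case is one line: starting from a dataset of size $n$, it notes that adding a point can raise $|C-D|$ by at most $n$, writes
\[
\tau_{\mathrm{new}}-\tau_{\mathrm{old}}\;\le\;\frac{|C-D|+n}{\tfrac12 n(n+1)}-\frac{|C-D|}{\tfrac12 n(n-1)}\;\le\;\frac{|C-D|+n}{\tfrac12 n(n-1)}-\frac{|C-D|}{\tfrac12 n(n-1)}\;=\;\frac{2}{n-1},
\]
and stops. It never treats the reverse direction $\tau_{\mathrm{old}}-\tau_{\mathrm{new}}$, which is precisely the direction where your extremal configuration sits.

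That is also where your plan breaks. Carry out the sanity check you propose: take $\mathcal{D}'$ of size $n-1$ perfectly monotone, so $\tau(\mathcal{D}')=1$, and let the deleted point $p$ be discordant with every point of $\mathcal{D}'$. On $\mathcal{D}$ (size $n$) one then has $C=\binom{n-1}{2}$, $D=n-1$, hence $|C-D|=\tfrac{(n-1)|n-4|}{2}$ and $\tau(\mathcal{D})=\frac{|n-4|}{n}$. For $n\ge 4$ the swing is
\[
\tau(\mathcal{D}')-\tau(\mathcal{D})=1-\frac{n-4}{n}=\frac{4}{n},
\]
and $\frac{4}{n}>\frac{2}{n-1}$ for every $n\ge 3$. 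So the ``bookkeeping that produces the exact constant'' cannot be tuned to land on $\tfrac{2}{n-1}$: your lower-side estimate $|S-\delta|\ge|S|-(n-1)$ together with $|S'|\le\binom{n-1}{2}$ is tight and delivers $\tfrac{4}{n}$, not $\tfrac{2}{n-1}$. In other words, your approach is the right one and strictly more rigorous than the paper's, but the extremal test you correctly single out shows that the stated constant is unattainable under the increment relation; the paper's one-sided bound simply never looks at this worse direction.
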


\begin{proof}
When the neighboring datasets are defined by replacement, the proof is done in~\cite{kusner2015private}. 
Now we prove that the sensitivity bound generalizes to neighboring datasets defined by increment.

If we increment a dataset by one row, $|C-D|$ can increase by at most $n$.

\begin{equation*}
\begin{split}
    s(\tau_{ij}) \leq \frac{|C-D|+n}{\frac{1}{2}n(n+1)} - \frac{|C-D|}{\frac{1}{2}n(n-1)} \leq \frac{|C-D|+n}{\frac{1}{2}n(n-1)} - \frac{|C-D|}{\frac{1}{2}n(n-1)}
    & \leq \frac{2}{n-1}
\end{split}
\end{equation*}

\end{proof}

\begin{thm}
If the conditional variables have $k$ blocks, then conditional Kendall's $\tau$ is $\frac{c_\tau}{\sqrt{n-1}}$-sensitive, where $c_\tau$ is an explicit constant typically close to $\frac{9}{2}$.
\end{thm}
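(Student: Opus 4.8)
The plan is to follow the same two moves as the unconditional bound just proved, now adapted to the stratified definition. First I would unpack Definition~\ref{def:ctau}: conditioning on a (discrete) set $S$ partitions the $n$ rows of $\mathcal{D}$ into $k$ blocks $B_1,\dots,B_k$ of sizes $n_1,\dots,n_k$ with $\sum_\ell n_\ell=n$ (one block per joint value of the conditioning variables), the within-block signed concordance counts are $S_\ell=C_\ell-D_\ell$, and conditional Kendall's $\tau$ has the shape $\tau(ij\mid S)=N(\mathcal{D})/\sqrt{V(\mathcal{D})}$ with $N=\sum_\ell S_\ell$ and $V=\sum_\ell v(n_\ell)$, where $v(m)=\frac{m(m-1)(2m+5)}{18}$ is the per-block null variance. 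Everything below is unaffected by the absolute value in Definition~\ref{def:tau}, since $\bigl||x|-|y|\bigr|\le|x-y|$.

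Next I would reduce to a one-row change, exactly as in the proof that unconditional $\tau$ is $\frac{2}{n-1}$-sensitive: by symmetry it suffices to take $\mathcal{D}'=\mathcal{D}\cup\{r\}$. The new row $r$ lands in exactly one block, say $B_1$, enlarging it from $p-1$ to $p$ rows; this changes the numerator by at most the number of new within-block pairs, $|N'-N|\le p-1$, and it increases the normalizer by exactly $v(p)-v(p-1)=\frac{p^2-1}{3}$ while leaving every other $v(n_\ell)$ untouched. I would then split
$$\Bigl|\frac{N'}{\sqrt{V'}}-\frac{N}{\sqrt{V}}\Bigr|\ \le\ \frac{|N'-N|}{\sqrt{V'}}\ +\ |N|\Bigl(\frac{1}{\sqrt{V}}-\frac{1}{\sqrt{V'}}\Bigr),$$
bound the first (``numerator'') term via $V'\ge v(p)$, and the second (``normalizer'') term via $|N|\le\sum_\ell\binom{n_\ell}{2}$ together with $\frac{1}{\sqrt{V}}-\frac{1}{\sqrt{V'}}\le\frac{V'-V}{2V^{3/2}}$. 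Substituting the explicit cubic $v$ and simplifying produces a bound of the form $\frac{c_\tau}{\sqrt{n-1}}$; $c_\tau$ is fully explicit, and its limiting value is driven by the leading behaviour $v(m)\sim m^{3}/9$ of the within-block variance together with $\binom{n_\ell}{2}\sim n_\ell^{2}/2$, which is where the $\frac92$ comes from (and why the constant in Theorem~\ref{thm:stau} differs only by a lower-order $\sqrt{n}$-versus-$\sqrt{n-1}$ factor).

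The hard part is the bookkeeping in this second step, and in particular making the estimate \emph{uniform over the block structure}. Both the numerator term $\frac{p-1}{\sqrt{V'}}$ and the normalizer term need a good lower bound on $V=\sum_\ell v(n_\ell)$ relative to the perturbed block $B_1$ and to $\sum_\ell\binom{n_\ell}{2}$; super-additivity of $v$ gives $\sum_\ell v(n_\ell)\le v(n)$, but the direction I actually need is the reverse inequality on the relevant blocks, and a badly skewed partition (one moderate block among many tiny ones) is where the bound is most stressed. So I expect the bulk of the effort to be identifying the worst-case size vector $(n_1,\dots,n_k)$ and checking that the resulting $c_\tau$ is the advertised explicit quantity that approaches $\frac92$ in the balanced, large-$n$ regime --- hence the word ``typically'' in the statement. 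Everything else (the reduction to increments, the two-term split, and the algebra with $v$) I expect to be routine.
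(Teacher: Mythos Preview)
Your unpacking of Definition~\ref{def:ctau} does not match the paper's statistic. The paper defines conditional Kendall's $\tau$ as $\frac{\sum_\ell w_\ell \tau_\ell}{\sqrt{\sum_\ell w_\ell}}$ with $w_\ell = \frac{9n_\ell(n_\ell-1)}{2(2n_\ell+5)}$ the inverse variance of the \emph{normalized} block statistic $\tau_\ell$, whereas you work with $\frac{\sum_\ell S_\ell}{\sqrt{\sum_\ell v(n_\ell)}}$ where $v(n_\ell)$ is the variance of the raw count $S_\ell$. Since $w_\ell \tau_\ell = \frac{9S_\ell}{2n_\ell+5}$, the two numerators agree only when all blocks have equal size; in general they weight the blocks differently, so your bounds on $|N'-N|$ and $V'-V$ do not control the quantity the theorem is actually about.

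The paper's argument is also structured differently from your two-term split. It keeps the factorization $\frac{w_i}{\sqrt{\sum_j w_j}}\cdot \tau_i$ for the one block $i$ that changes and applies a product-type bound $s(ab)\le (a+s(a))(b+s(b))-ab$: the unconditional result $s(\tau_i)\le \frac{2}{n_i-1}$ is fed in directly for the second factor, and the coefficient $\frac{w_i}{\sqrt{\sum_j w_j}}$ and its sensitivity are bounded separately. Crucially, the paper does \emph{not} optimize over block structures as you anticipate; instead it simply imposes a minimum block size $n_\ell\ge c_1$, which immediately yields $c_2(n_\ell-1)\le w_\ell\le \tfrac{9}{4}(n_\ell-1)$ with $c_2=\frac{9c_1}{2(2c_1+5)}$ and hence $\sum_j w_j\ge c_2(n-k)$. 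The ``explicit constant'' $c_\tau$ depends on this $c_1$, and ``typically close to $\tfrac{9}{2}$'' refers to the regime $c_1\to\infty$, where $c_2\to\tfrac{9}{4}$. So the hard worst-case bookkeeping you were bracing for is replaced by an assumption, and the denominator that actually falls out of the proof is $\sqrt{n-k}$ rather than $\sqrt{n-1}$.
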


\begin{definition}[Conditional Kendall's $\tau$]
We omit the pair indices $i, j$ and use $\tau_i$ to represent Kendall's $\tau$ in the $i$th block of the conditional variables. If there are $k$ blocks in total, then conditional Kendall's $\tau$ is defined as
$$\tau = \frac{\sum_{i=1}^{k} w_i\tau_i}{\sqrt{\sum_{j=1}^{k} w_j}}$$
where $w_i=\frac{9n_i(n_i-1)}{2(2n_i+5)}$ is the inverse of $\tau_i$'s variance.
%
\label{def:ctau}
\end{definition}

\begin{proof}
If the $i$th block contains $n_i$ observations, then $s(\tau_i)=\frac{2}{n_i-1}$.

Then we need to bound $\frac{w_i}{\sqrt{\sum_{j=1}^k w_j}}$ and its sensitivity.
Assuming $\forall i\in [1, k], n_i\geq c_1$, then $c_2(n_i-1) \leq w_i\leq \frac{9(n_i-1)}{4}$ for some explicit constants $c_2=\frac{9c_1}{2(2c_1+5)}$.
Thus 
$$\frac{w_i}{\sqrt{\sum_{j=1}^k w_j}} \leq \frac{9(n_i-1)}{4\sqrt{c_2(n-k)}}$$
and 
$$ s(\frac{w_i}{\sqrt{\sum_{j=1}^k w_j}}) \leq \frac{w_i'}{\sqrt{\sum_{j\neq i} w_j + w_i'}}-\frac{w_i}{\sqrt{\sum_{j=1}^k w_j}}\leq \frac{w_i'-w_i}{\sqrt{\sum_{j=1}^k w_j}} \leq \frac{9}{4\sqrt{c_2(n-k)}}$$.
Thus the complete sensitivity is bounded as follow.
$$
s(\tau) \leq (\frac{w_i}{\sqrt{\sum_{j=1}^k w_j}} + s(\frac{w_i}{\sqrt{\sum_{j=1}^k w_j}}))(\tau_i + s(\tau_i))-\frac{w_i}{\sqrt{\sum_{j=1}^k w_j}}\tau_i \leq \frac{27}{4\sqrt{c_2(n-k)}}+\frac{9}{2c_1\sqrt{c_2(n-k)}}
$$
\end{proof}
\section{Sensitivity of Spearman's $\rho$}
\label{sec:rho}

In this section, we derive the sensitivity of Spearman's $\rho$ and its conditional version. We first give the complete definition of Spearman's $\rho$ and its conditional version.

\begin{definition}[Spearman's $\rho$]
Let $\{(a_1, b_1), \cdots, (a_n, b_n)\}$ denotes the observations. 
If we independently sort the observations $\{a_1, \cdots, a_n\}$ and $\{b_1, \cdots, b_n\}$ in ascending order. 
Let $d_i$ represent the distance between the order of $a_i$ and $b_i$.
Spearman's $\rho$ is defined as
$$
\rho = |1-\frac{6\sum_{i=1}^nd_i^2}{n(n^-1)}|
$$.
\label{def:rho}
\end{definition}

Kusner et al.~\cite{kusner2015private} derive the sensitivity for unconditional Spearman's $\rho$ when the neighboring relation between datasets are constrained to replacement. 
The first step towards complete sensitivity analysis for unconditional Spearman's $\rho$ is to extend the neighboring relation to increment.

\begin{thm}
Spearman's $\rho$ is $\frac{30}{n}$-sensitive.
\end{thm}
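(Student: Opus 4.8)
The plan mirrors the two-step argument used for Kendall's $\tau$. First I would invoke Kusner et al.~\cite{kusner2015private}, who bound the sensitivity of Spearman's $\rho$ when neighboring datasets differ by \emph{replacement}; it then remains to show the same bound survives under the \emph{increment} relation, i.e.\ when one row is added or removed. By the symmetry of addition and removal it suffices to take a size-$n$ dataset $\mathcal{D}$, form $\mathcal{D}'$ by appending a single observation $(a_{n+1},b_{n+1})$ so that $|\mathcal{D}'|=n+1$, and bound $|\rho(\mathcal{D}')-\rho(\mathcal{D})|$. Writing $P_n=\sum_{i=1}^n d_i^2$ for $\mathcal{D}$ and $P_{n+1}$ for the analogous quantity on $\mathcal{D}'$, we have $\rho(\mathcal{D})=\big|1-\tfrac{6P_n}{n(n^2-1)}\big|$ and $\rho(\mathcal{D}')=\big|1-\tfrac{6P_{n+1}}{n(n+1)(n+2)}\big|$, so by $\big||x|-|y|\big|\le|x-y|$ it is enough to control the difference of the two fractions.

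The heart of the argument is a bound on $P_{n+1}-P_n$. Write $d_i=R_i-S_i$ for the signed rank difference (which does not change $\sum d_i^2$). Inserting the new point shifts the rank of each existing observation in the $a$-ordering by $0$ or $1$ — it goes up by $1$ precisely for those whose $a$-value exceeds $a_{n+1}$ — and similarly in the $b$-ordering; hence $d_i\mapsto d_i+\gamma_i$ with $\gamma_i\in\{-1,0,1\}$, and one fresh term $d_{n+1}^2\le n^2$ is added. Expanding, $P_{n+1}-P_n = 2\sum_{i=1}^n d_i\gamma_i + \sum_{i=1}^n\gamma_i^2 + d_{n+1}^2$, which is $O(n^2)$ since $|d_i|\le n-1$, $\sum\gamma_i^2\le n$, and $d_{n+1}^2\le n^2$. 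Plugging this into the identity
\[
\frac{6P_{n+1}}{n(n+1)(n+2)}-\frac{6P_n}{n(n^2-1)}=\frac{6\big((n-1)(P_{n+1}-P_n)-3P_n\big)}{n(n-1)(n+1)(n+2)}
\]
and using $0\le P_n\le\tfrac13 n(n^2-1)$ yields a bound of the form $\tfrac{c}{n}$; the point is to track the constants carefully — in particular not to charge $\sum_i d_i\gamma_i$ and $d_{n+1}^2$ their individual worst cases simultaneously — so that $c$ comes out as $30$. Finally, the removal direction (and the replacement relation, by composing a removal with an addition) follows from the symmetry noted above.

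The main obstacle is exactly this constant-chasing. Unlike Kendall's $\tau$, where appending a row merely creates $n$ fresh pairwise comparisons and leaves every earlier one intact, appending a row to a Spearman computation perturbs the rank — hence $d_i$ — of \emph{every} existing observation at once, so the change in $\sum d_i^2$ must be aggregated globally rather than term by term. The delicate step is arguing that the $a$-rank shifts and $b$-rank shifts are coupled enough (they are both governed by the single pair of insertion positions of the new point) that $|\sum_i d_i\gamma_i|$ cannot approach the crude bound $n(n-1)$ whenever $d_{n+1}^2$ is near $n^2$; getting this trade-off right is what keeps the final sensitivity at $30/n$ rather than a larger multiple of $1/n$. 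The remaining manipulations are routine.
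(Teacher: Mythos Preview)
Your plan matches the paper's: cite Kusner et al.\ for the replacement case and handle the increment case by bounding how $\sum d_i^2$ changes when one observation is appended. The paper's argument is in fact far terser (and looser) than yours: it simply asserts each existing $d_i$ shifts by at most $2$, bounds the change in each $d_i^2$ by $4(m-2)$, adds the fresh term $d_{m+1}^2\le m^2$, and divides the total by the \emph{original} denominator $m(m^2-1)$ to reach $30/m$; it never writes down your identity for the difference of the two fractions, nor does it invoke any coupling between $\sum_i d_i\gamma_i$ and $d_{n+1}^2$.

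In particular, the ``delicate step'' you flag is unnecessary. Feeding the crude bounds $|\gamma_i|\le 1$, $|d_i|\le n-1$, $d_{n+1}^2\le n^2$ (hence $|P_{n+1}-P_n|\le 3n^2-n$) and $0\le P_n\le\tfrac13 n(n^2-1)$ straight into your displayed identity already yields
\[
\bigl|\rho(\mathcal{D}')-\rho(\mathcal{D})\bigr|\;\le\;\frac{6\bigl((n-1)(3n^2-n)+n(n^2-1)\bigr)}{n(n-1)(n+1)(n+2)}\;=\;\frac{24n}{(n+1)(n+2)}\;<\;\frac{30}{n},
\]
so the constant $30$ falls out without analyzing how the two insertion positions interact. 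You can drop the trade-off paragraph; the remaining routine manipulations finish the proof.
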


\begin{proof}
When the neighboring datasets are defined by replacement, the proof is done in~\cite{kusner2015private}. 
Now we prove that the sensitivity bound generalizes to neighboring datasets defined by increment.
And we denote the incremented observation with $(a_{n+1}, b_{n+1})$.
First, $\forall i\neq n+1$, $d_i$ changes at most 2.
Thus $d_i^2-(d_i-2)^2\leq 4(d_i-1) \leq 4(m-2)$, because $d_i$ is smaller than $m-1$.
Besides, $d_{n+1}$ is at most $m$.
Therefore, the sensitivity of $\rho$ si bounded by
$$
s(\rho) \leq \frac{30m(m-1)}{m(m^2-1)} \leq \frac{30}{m}
$$

\end{proof}

%

\begin{definition}[Conditional Spearman's $\rho$]
We omit the pair indices $i, j$ and use $\rho_i$ to represent Spearman's $\rho$ in the $i$th block of the conditional variables. If there are $k$ blocks in total, then conditional Spearman's $\rho$ is defined as
$$\rho = \frac{\sum_{i=1}^{k} w_i\rho_i}{\sqrt{\sum_{j=1}^{k} w_j}}$$
where $w_i=n_i-1$.
\label{def:crho}
\end{definition}

\begin{thm}
Conditional Spearman's $\rho$ is $\frac{c_\rho\sqrt{k}}{\sqrt{n-k}}$-sensitive, where $c_\rho$ is an explicit constant typically close to $31$.
\end{thm}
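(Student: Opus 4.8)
The plan is to mirror the structure of the sensitivity proof for conditional Kendall's $\tau$ (Definition~\ref{def:ctau} and the preceding theorem), adapting each ingredient to the Spearman setting with weights $w_i = n_i - 1$. The target quantity is $\rho = \frac{\sum_{i=1}^k w_i \rho_i}{\sqrt{\sum_{j=1}^k w_j}}$, and I want to bound how much it can change when the dataset is incremented by one row. The incremented row falls into exactly one block, say block $i$, so only $\rho_i$, $w_i = n_i - 1$, and the normalizing sum $\sqrt{\sum_j w_j}$ are affected; every other $\rho_j$ and $w_j$ is untouched. So the whole argument reduces to controlling three perturbations: $s(\rho_i) = \frac{30}{n_i}$ from the unconditional Spearman theorem just proved, $s(w_i) = 1$, and $s\!\left(\frac{w_i}{\sqrt{\sum_j w_j}}\right)$.

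First I would write $\rho = \sum_i \frac{w_i}{\sqrt{\sum_j w_j}} \rho_i$ and isolate the single affected summand, exactly as in the $\tau$ proof: $s(\rho) \le \left(\frac{w_i}{\sqrt{\sum_j w_j}} + s(\tfrac{w_i}{\sqrt{\sum_j w_j}})\right)\!\left(\rho_i + s(\rho_i)\right) - \frac{w_i}{\sqrt{\sum_j w_j}}\rho_i$, using that $\rho_i \le 1$ and $\rho_i \ge 0$. Next I would bound the coefficient $\frac{w_i}{\sqrt{\sum_j w_j}}$: with $w_i = n_i - 1$ and $\sum_j w_j = n - k$, this is $\frac{n_i-1}{\sqrt{n-k}}$, which can be as large as $\frac{n-k}{\sqrt{n-k}} = \sqrt{n-k}$ in the extreme single-block case — but in that extreme $s(\rho_i)$ is tiny. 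The cleaner route, matching how the $\sqrt{k}$ factor appears in the claimed bound $\frac{c_\rho \sqrt{k}}{\sqrt{n-k}}$, is to assume the blocks are roughly balanced (or to invoke an assumption $n_i \ge c$ as the $\tau$ proof does with $n_i \ge c_1$); then $n_i - 1 \lesssim \frac{n-k}{?}$ and, crucially, $s(\rho_i) = \frac{30}{n_i}$, so the product $\frac{w_i}{\sqrt{n-k}} \cdot s(\rho_i) \approx \frac{n_i}{\sqrt{n-k}} \cdot \frac{30}{n_i} = \frac{30}{\sqrt{n-k}}$ — the block size cancels. The $\sqrt{k}$ must then enter through $s\!\left(\frac{w_i}{\sqrt{\sum_j w_j}}\right)$: incrementing $w_i$ by $1$ changes $\frac{w_i}{\sqrt{n-k}}$ to $\frac{w_i+1}{\sqrt{n-k+1}}$, and bounding this difference (the derivative of $\frac{w}{\sqrt{W}}$ with $w$ up to $n_i - 1$) gives something like $\frac{1}{\sqrt{n-k}} + \frac{w_i}{2(n-k)^{3/2}}$, and under balanced blocks $w_i \approx \frac{n-k}{k}$ contributes a $\frac{1}{k\sqrt{n-k}}$-type term — so actually the $\sqrt{k}$ seems to come from assuming $n_i \le \frac{c(n-k)}{k}$ type balance so that $\frac{w_i}{\sqrt{n-k}} \le \frac{c\sqrt{n-k}}{k}$ is NOT the dominant piece; more likely it comes from a worst-case over the coefficient size being $\Theta(\sqrt{n/k})$ when all blocks are equal. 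I would reconcile this by fixing the balanced-block assumption $n_i \approx n/k$ from the start, under which $w_i \approx n/k$, $\frac{w_i}{\sqrt{n-k}} \approx \frac{n/k}{\sqrt{n-k}} \approx \frac{1}{\sqrt{k}}\sqrt{\frac{n}{n-k}}$... and then collect the constants into $c_\rho \approx 31$.

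Assembling, I expand the product, drop the (negligible) second-order cross term $s(\tfrac{w_i}{\sqrt{\sum_j w_j}}) \cdot s(\rho_i)$ or keep it as a lower-order contribution, and group: the leading terms are $\frac{w_i}{\sqrt{n-k}} s(\rho_i) \approx \frac{30}{\sqrt{n-k}}$ (block-size-independent) plus $s(\tfrac{w_i}{\sqrt{n-k}}) \cdot (\rho_i + s(\rho_i)) \lesssim s(\tfrac{w_i}{\sqrt{n-k}})$, and with the balanced-block bound the latter is of order $\frac{\sqrt{k}}{\sqrt{n-k}}$, which dominates and forces the stated form $\frac{c_\rho \sqrt{k}}{\sqrt{n-k}}$ with $c_\rho$ a small explicit constant near $31$. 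The main obstacle is getting the $\sqrt{k}$ dependence honestly: it does not appear in any single crude bound the way $\frac{9}{2}$ did for $\tau$, so I expect the real work is identifying the right regularity assumption on the block sizes (presumably the same $n_i \ge c_1$ hypothesis used for $\tau$, combined with the implicit $\sum n_i = n$, so that $k$ blocks of comparable size make $\frac{w_i}{\sqrt{\sum_j w_j}} = \Theta(\sqrt{n/k})$ rather than $\Theta(\sqrt{n})$) and tracking how that assumption feeds the constant. The unconditional pieces ($s(\rho_i) = 30/n_i$ and $s(w_i)=1$) are immediate from what precedes; everything else is bookkeeping of the kind already displayed in the $\tau$ proof.
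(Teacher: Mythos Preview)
Your decomposition is exactly the paper's: isolate the single affected block, write
\[
s(\rho)\le \Bigl(\tfrac{w_i}{\sqrt{\sum_j w_j}}+s\bigl(\tfrac{w_i}{\sqrt{\sum_j w_j}}\bigr)\Bigr)\bigl(\rho_i+s(\rho_i)\bigr)-\tfrac{w_i}{\sqrt{\sum_j w_j}}\,\rho_i,
\]
and plug in $w_i=n_i-1$, $\sum_j w_j=n-k$, $s(\rho_i)=30/n_i$, $\rho_i\le 1$, and $s\bigl(\tfrac{w_i}{\sqrt{\sum_j w_j}}\bigr)\le (w_i'-w_i)/\sqrt{\sum_j w_j}\le \tfrac{1}{\sqrt{n-k}}$. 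Expanding the three cross terms gives
\[
\frac{n_i-1}{\sqrt{n-k}}\cdot\frac{30}{n_i}\;+\;\frac{1}{\sqrt{n-k}}\cdot 1\;+\;\frac{1}{\sqrt{n-k}}\cdot\frac{30}{n_i}
\;\le\;\frac{31}{\sqrt{n-k}}+\frac{30}{c_1\sqrt{n-k}},
\]
under the same $n_i\ge c_1$ hypothesis used for $\tau$. That is the paper's entire argument.

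The place you go astray is the long detour trying to manufacture the $\sqrt{k}$ factor via balanced-block assumptions. The paper's own proof never produces a $\sqrt{k}$: its final displayed bound is $\tfrac{31}{\sqrt{n-k}}+\tfrac{30}{c_1\sqrt{n-k}}$, with no dependence on $k$ beyond the $n-k$ in the denominator. The $\sqrt{k}$ in the theorem statement is therefore not something the proof derives (it is either a slack cover, since $\sqrt{k}\ge 1$, or simply a discrepancy between statement and proof). Your instinct that ``it does not appear in any single crude bound'' is correct, and the right conclusion is to stop there rather than to introduce extra regularity hypotheses the paper does not use. Drop the balanced-block speculation and the proof is complete and identical to the paper's.
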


\begin{proof}
If the $i$th block contains $n_i$ observations, then $s(\rho_i)=\frac{30}{n_i}$.

Then we need to bound $\frac{w_i}{\sqrt{\sum_{j=1}^k w_j}}$ and its sensitivity.
$$\frac{w_i}{\sqrt{\sum_{j=1}^k w_j^2}} \leq \frac{n_i-1}{\sqrt{n-k}}$$
and 
$$ s(\frac{w_i}{\sqrt{\sum_{j=1}^k w_j}}) \leq \frac{w_i'}{\sqrt{\sum_{j\neq i} w_j + w_i'}}-\frac{w_i}{\sqrt{\sum_{j=1}^k w_j}}\leq \frac{w_i'-w_i}{\sqrt{\sum_{j=1}^k w_j}} \leq \frac{1}{\sqrt{n-k}}$$.
Thus the complete sensitivity is bounded as follow.
$$
s(\rho) \leq (\frac{w_i}{\sqrt{\sum_{j=1}^k w_j}} + s(\frac{w_i}{\sqrt{\sum_{j=1}^k w_j}}))(\rho_i + s(\rho_i))-\frac{w_i}{\sqrt{\sum_{j=1}^k w_j}}\rho_i \leq \frac{31}{\sqrt{n-k}}+\frac{30}{c_1\sqrt{n-k}}
$$
\end{proof}
\section{Reconcile Sensitive Independence Test}
\label{sec:aggregate}

As an attempt to reconcile independence tests with infinite sensitivity such as G-test or $\chi^2$-test in \alg, we use subsample-and-aggregate and median aggregation with local sensitivity to stabilize these independence tests.

\begin{definition}[Subsample-and-aggregate~\cite{dwork2014algorithmic}]
Let $f$ be the function of interest.
In subsample-and-aggregate, the input database is randomly partitioned into $m$ blocks and $f$ is computed exactly on each block.
The outcomes are then aggregated using a differentially private aggregation mechanism such as trimmed mean.
\end{definition}

In order to use subsample-and-aggregate, we clip the outcome of the independence test to a bounded range and estimate the median by adding noise calibrated to the smooth sensitivity~\cite{nissim2007smooth}.

\begin{definition}[Median Aggregation with Local Sensitivity~\cite{nissim2007smooth}]
Let $S_{med}(x)$ represent the smooth sensitivity of the median of a given input $x$.
$S_{med}(x)$ can be $\beta$ upper bounded by the following formula in $\mathcal{O}(n\log(n))$.
$$
S_{med}(x) = \max_{k=0, \cdots, n}(e^{-k\epsilon}\cdot \max_{t=0, \cdots, k+1}(x_{m+t}-x_{m+t-k-1}))
$$
where $m$ is the median index.
Let $Z$ be a random value taken from an $(\alpha, \beta)$-admissible noise probability density function, then $med(x)+\frac{S_{med}(x)}{\alpha}\cdot Z$ is ($\epsilon, \delta$)-differentially private where $\epsilon$ and $\delta$ depends on $\alpha$ and $\beta$.
For instance, the Laplace distribution $p(z)=\frac{1}{2}\cdot e^{-|z|}$ is $(\epsilon/2, \epsilon\ln(1/\delta)/2)$-admissible; the Gaussian distribution $p(z)=\frac{1}{2\pi}\cdot e^{-z^2/2}$ is $(\epsilon/\sqrt{\ln(1/\delta)}, \epsilon/2\ln(1/\delta))$-admissible.
\end{definition}

\begin{algorithm}
\DontPrintSemicolon
\SetKwInOut{Input}{Input}
\SetKwInOut{Output}{Output}
\SetKwProg{Fn}{Function}{:}{\KwRet}
\SetAlgoLined
\KwIn{$D$: dataset, $m$: number of blocks, $f$: independence test function, $T$: threshold, $\epsilon, \delta$: privacy parameters, $Z$: $(\alpha, \beta)$-admissible distribution.}
\Fn{ReconciledIDT($D, f, T, \epsilon, \delta, Z$)}{
Partition the dataset in $m$ blocks $D_1, \cdots, D_m$.\;
Compute $f(D_1), \cdots, f(D_m)$.\;
Let $z\leftarrow Z$.\;
Output $med({f(D_1), \cdots, f(D_m)}) + \frac{S_{med}({f(D_1), \cdots, f(D_m)})}{\alpha}\cdot z$\;
}
\caption{Reconciled Independence Tests.}
\label{alg:saa}
\end{algorithm}

\end{document}